 \newcommand{\beqn}{\begin{eqnarray}}
 \newcommand{\eeqn}{\end{eqnarray}}
 \newcommand{\be}{\begin{equation}}
 \newcommand{\ee}{\end{equation}}
 \newcommand{\ba}{\begin{array}}
 \newcommand{\ea}{\end{array}}
 \newcommand{\pa}{\partial}
 \newcommand{\re}{\ref}
 \newcommand{\ci}{\cite}
 \newcommand{\ds}{\displaystyle}
 \newcommand{\la}{\label}
 \newcommand{\rRe}{{\rm Re\5}}
 \newcommand{\fr}{\frac}
\newcommand{\ov}{\overline}
\newcommand{\ti}{\tilde}
\newcommand{\cF}{{\cal F}}
\newcommand{\cA}{{\cal A}}
\newcommand{\cG}{{\cal G}}
\newcommand{\cH}{{\cal H}}
\newcommand{\cL}{{\cal L}}
\newcommand{\cR}{{\cal R}}
\newcommand{\cP}{{\cal P}}
\newcommand{\cO}{{\cal O}}
\newcommand{\cV}{{\cal V}}
\newcommand{\ve}{\varepsilon}
\newcommand{\de}{\delta}
\newcommand{\ga}{\gamma}
\newcommand{\Ga}{\Gamma}
\newcommand{\si}{\sigma}
\newcommand{\om}{\omega}
\newcommand{\na}{\nabla}
\newcommand{\Si}{\Sigma}
\newcommand{\lam}{\lambda}
\newcommand{\5}{{\hspace{0.5mm}}}
\newcommand{\N}{\mathbb{N}}
\newcommand{\R}{\mathbb{R}}
\newcommand{\C}{\mathbb{C}}
\newtheorem{theorem}{Theorem}[section]
\newtheorem{defin}[theorem]{Definition}
\newtheorem{lemma}[theorem]{Lemma}
\newtheorem{cor}[theorem]{Corollary}
\newtheorem{pro}[theorem]{Proposition}
\newcommand{\bp}{\begin{pro}}
\newcommand{\ep}{\end{pro}}
\newcommand{\const}{\mathop{\rm const}\nolimits}
\begin{document}
\begin{titlepage}
\bigskip\bigskip\bigskip

\begin{center}
{\Large\bf On long-time  decay for modified Klein-Gordon equation}
\vspace{1cm}
\\
{\large E.~A.~Kopylova}
\footnote{Supported partly by  FWF,
DFG and RFBR grants.}\\
{\it Institute for Information Transmission Problems RAS\\
B.Karetnyi 19, Moscow 101447,GSP-4, Russia}\\
e-mail:~elena.kopylova@univie.ac.at
\end{center}

\date{}

\vspace{0.5cm}
\begin{abstract}
\noindent
We obtain a dispersive long-time decay in weighted energy norms for solutions
of the  Klein-Gordon equation in a moving frame. The decay extends the
results of Jensen, Kato and Murata 
for the equations of the Schr\"odinger type. We modify the
approach  to make it applicable to relativistic equations.
\smallskip

\noindent
{\em Keywords}: Klein-Gordon equation, relativistic equations, resolvent,
spectral representation, weighted spaces, Born series, convolution.
\smallskip

\noindent
{\em 2000 Mathematics Subject Classification}: 35L10, 34L25, 47A40, 81U05.
\end{abstract}

\end{titlepage}
\setcounter{equation}{0}
\section{Introduction}
In this paper, we establish a dispersive long time decay in weighted energy norms
for the solutions to  1D Klein-Gordon equation in a moving frame with
the velocity $v$
\be\la{KGEr} \dot \Psi(t)=\cA \Psi(t)
\ee
where
$$
\Psi(t)=\left(\begin{array}{c}
  \psi(t)
  \\
  \pi(t)
  \end{array}\right),\quad
 \cA =\left(\begin{array}{cc}
  v\nabla                          &   1
  \\
 \Delta-m^2-V  &   v\nabla
  \end{array}\right),\quad\na=\fr{d}{dx},\quad\Delta=\fr{d^2}{dx^2}
$$
with $m>0$, and $|v|<1$.
For $s,\si\in\R$, we denote by $H^s_\si=H^s_\si (\R)$
the weighted  Agmon-Sobolev spaces \ci{A}, with the finite norms
$$
  \Vert\psi\Vert_{H^s_\si}=\Vert\langle x
  \rangle^\si\langle\na\rangle^s\psi\Vert_{L^2(\R)}<\infty,
  \quad\quad \langle x\rangle=(1+|x|^2)^{1/2}
$$
Denote $L^2_\si=H^0_\si$. We assume that $V(x)$ is a real function, and
\be \label{V}
    |V(x)|+|V'(x)|\le C\langle x\rangle^{-\beta},~~~~~x\in\R
\ee
for some $\beta>5$. Then the multiplication by $V(x)$
is bounded operator $H^1_s \to H^{1}_{s+\beta}$ for any $s\in\R$.

We consider the ``nonsingular case'' in the terminology of
\cite{M}, when the truncated resolvent of the
operator $-\Delta+\ga^2V(x)$, $\ga=1/\sqrt{1-v^2}$ is bounded at the edge point
$\zeta=0$  of the continuous spectrum. In other words,
\be\la{SC} 
{\it the~ point}~\zeta=0~
{\it is~ neither~ eigenvalue~ nor~ resonance~ for~ the~
operator}~-\Delta+\ga^2V(x)
\ee
\smallskip
By definition (see \cite[page 18]{M}) the point $\zeta=0$ is the
 resonance if there exists a nonzero solution $\psi\in L^2_{-1/2-0}\setminus L^2$
to the equation $(-\Delta+\ga^2V(x))\psi=0$.
\begin{defin}\la{def2}
$\cF _{\si}$ is the  complex  Hilbert space $H^1_{\si}\oplus H^0_{\si}$
of vector-functions $\Psi =(\psi ,\pi )$ with the norm
$$
   \Vert \,\Psi\Vert_{\cF _{\si}}=
   \Vert\psi\Vert_{H^1_\si} +\Vert\pi \Vert_{H^0_\si}<\infty
$$
\end{defin}
Our main result is the following long time decay of the solutions
to (\re{KGEr}): in the nonsingular case, the asymptotics hold
\be\label{full}
 \Vert\cP_c\Psi(t)\Vert_
 {\cF_{-\si}}=\cO (|t|^{-3/2}),\quad t\to\pm\infty
\ee
for initial data $\Psi_0=\Psi(0)\in\cF _\si$ with $\sigma>5/2$, where  $\cP_c$
is a Riesz projection onto the continuous spectrum of the operator $\cA$.
The decay is desirable for the study of asymptotic stability and scattering
for the solutions to nonlinear hyperbolic equations.

Let us comment on previous results in this direction.
The decay of type (\re{full}) in weighted norms has
been established first by Jensen and Kato \ci{jeka} for the Schr\"odinger
equation in the dimension $n=3$. The result has been extended to all
other dimensions by Jensen and Nenciu \ci{je1,je2,jene}, and to more
general PDEs of the Schr\"odinger type by Murata \ci{M}.

The Jensen-Kato-Murata approach is not applicable directly
to the relativistic equations. The difference reflects distinct character of wave
propagation in the relativistic and nonrelativistic equations
(see the discussion in \ci[Introduction]{1dkg}).

In \ci{1dkg} the decay of type (\re{full}) in the weighted energy norms
has been proved  for the 1D Klein-Gordon equation with $v=0$.
The approach  develops the Jensen-Kato-Murata techniques to make it applicable to the
relativistic equations. Namely, we apply the finite Born series and convolution.
Here we extend the result  \ci{1dkg} to the case $v\not=0$.

Our paper is organized as follows.
In Section \ref{fKG} we obtain the time decay for the solution to the free
modified Klein-Gordon equation and state the spectral properties of the
free resolvent..
In Section \ref{pKG} we obtain spectral properties of the perturbed resolvent
and prove the decay \eqref{full}.

\setcounter{equation}{0}
\section{Free equation}\la{fKG}
Here we consider the free  equation with zero potential $V(x)=0$:
\be\la{KGE1} \dot \Psi(t)=\cA_0 \Psi(t)
\ee
where
$$
\cA_0 =\left(\begin{array}{cc}
  v\nabla                          &   1
  \\
 \Delta-m^2  &   v\nabla
  \end{array}\right)
$$
\subsection{Spectral properties}
For $t>0$ and $\Psi_0=\Psi(0)\in\cF_0$,
the solution $\Psi(t)$ to (\re{KGE1})
admits the spectral Fourier-Laplace representation
\be\la{Gint}
  \theta(t)\Psi(t)=-\fr 1{2\pi}\int\limits_{\R}e^{(i\om+\ve) t}
  \cR _0(i\om+\ve)\Psi_0~d\om,~~~~t\in\R
\ee
with any $\ve>0$, where $\theta(t)$ is the Heaviside function,
$\cR _0(\lam)=({\cal A}_0-\lam)^{-1}$ for $\rRe\lam>0$
is the resolvent of the operator ${\cal A}_0$.
The representation follows from the stationary equation
$\lam\ti\Psi^+(\lam)=\cA_0\ti\Psi^+(\lam)+\Psi_0$
for the Fourier-Laplace transform
$\ti\Psi^+(\lam):=\ds\int_\R \theta(t)e^{-\lam t}\Psi(t)dt$, $\rRe\lam>0$.
The solution  $\Psi(t)$ is  continuous bounded  function of
$t\in\R$ with the values in $\cF_0$
by the energy conservation for the equation (\re{KGE1}).
Hence, $\ti\Psi^+(\lam)=-\cR_0(\lam)\Psi_0$
is analytic function in $\rRe\lam>0$ with the values in $\cF_0$, and
bounded for $\rRe\lam>\ve$. Therefore, the integral (\re{Gint})
converges in the sense of distributions of $t\in\R$ with the values in $\cF_0$.
Similarly to (\re{Gint}),
\be\la{Gints}
  \theta(-t)\Psi(t)=\fr 1{2\pi}\int\limits_{\R}e^{(i\om-\ve) t}
  \cR _0(i\om-\ve)\Psi_0~d\om,~~~~t\in\R
\ee
Let us calculate  the resolvent $\cR _0(\lam)$. We have
$$
\cR_0(\lam)=(\cA_0-\lam)^{-1}=
\left(\ba{cc}
v\nabla-\lam             & 1 \\
\Delta-m^2               & v\nabla-\lam
\ea\right)^{-1},\quad\rRe\lam>0
$$
In the  Fourier space we  obtain
$$
\left(\ba{cc}
-(ivk+\lam) & 1 \\
-(k^2+m^2) & -(ivk+\lam)
\ea\right)^{-1}=[(ivk+\lam)^2+k^2+m^2]^{-1}\left(
\ba{cc}
-(ivk+\lam) & -1 \\
k^2+m^2 & -(ivk+\lam)
\ea\right)
$$
Taking the inverse Fourier transform, we obtain the resolvent
\be\la{Green}
\cR_0(\lam)=\left(\ba{cc}
v\nabla-\lam   &                -1 \\
-\Delta+m^2    & v\nabla-\lam
\ea\right)R_0(\lam)=\left(\ba{cc}
(v\nabla-\lam)R_0(\lam)      & -R_0(\lam) \\
1-(v\nabla-\lam)^2R_0(\lam)  & (v\nabla-\lam)R_0(\lam)
\ea\right)
\ee
where $R_0(\lam)$
is the  operator with the integral kernel
\be\la{dete}
R_0(\lam,x,y)=F^{-1}_{k\to x-y}\ds\fr{1}{k^2+m^2+(ivk+\lam)^2},~~~x,y\in\R
\ee
which is well defined since the denominator in (\re{dete}) does not vanish
for $\rRe\lam>0$.
Denote  $\cH_0=-(1-v^2)\Delta+m^2=-\fr 1{\ga^2}\Delta+m^2$. Since
\be\la{cH0}
(\cH_0+\lam^2-2v\lam\na)\psi(x)=e^{-\ga^2v\lam x}(\cH_0+\ga^2\lam^2)
e^{\ga^2v\lam x}\psi(x)
\ee
we have
\be\la{RtilR}
R_0(\lam)=(\cH_0+\lam^2-2v\lam\na)^{-1}
=e^{-\ga^2v\lam x}\ga^2\tilde R_0(\ga^2m^2+\ga^4\lam^2)e^{\ga^2v\lam y}
\ee
where
$$
\tilde R_0(\zeta)=\!=\!(-\Delta+\zeta)^{-1}
={\rm Op}\Big[\frac{e^{-\sqrt{\zeta}|z|}}{2\sqrt{\zeta}}\Big]
$$
is the  Schr\"odinger resolvent. Finally, 
\be\la{glam}
R_0(\lam,x,y)=\fr{e^{-\ga^2(\sqrt{\lam^2-\mu^2}\,|x-y|+v\lam (x-y))}}
{2\sqrt{\lam^2-\mu^2}},~~ \mu\!=\!\fr{im}{\ga}
\ee
Denote
$\Gamma:=(-i\infty,-\mu,)\cup (\mu,~i\infty)$.
We  choose $\rRe\sqrt{\lam^2-\mu^2}>0$ for $\lam\in\C\setminus\ov\Gamma$.
Then
\be\la{rr}
 0<\rRe (v\lam)<\rRe\sqrt{\lam^2-\mu^2},~~~~~~\lam\in\C\setminus\ov\Gamma
\ee
Denote by $\cL (B_1,B_2)$ the Banach space of bounded linear operators
from a Banach space $B_1$ to a Banach space $B_2$.
Formulas (\re{glam})
implies the following properties of $R_0(\lam)$:
\begin{lemma}\la{sp}(cf. \ci{A, M})\\
i) The operator $R _0(\lam)$ is  analytic function of
$\lam\in\C\setminus\ov\Ga$ with the values in  $\cL(H^0_0,H^1_0)$.
\\
ii) For $\lam\in\Ga$, the convergence (limiting absorption principle)
holds 
\be\la{lap}
R_0(\lam\pm \ve)\to R_0(\lam\pm 0),\quad \ve\to 0+
\ee
in  $\cL (H^0_{\si},H^1_{-\si})$ with $\si>1/2$,
uniformly in $|\lam|\ge |\mu|+r$ for any $r>0$.
\\
iii) The  asymptotics hold
\be\la{g0ee}
R_0(\lam)=B_0^{\pm}\frac{1}{\sqrt\nu}+B_1^{\pm}+{\cal O}(|\nu|^{1/2}),\quad
\nu=\lam\mp\mu\to 0,\quad\lam\in\C\setminus\ov\Gamma
\ee
in   ${\cal L}(H^0_{\si},H^1_{-\si})$ with $\si>5/2$, where
\beqn\la{AB}
B_0^{\pm}&=&{\rm Op}\Big[\frac {e^{\mp\ga^2 v\mu (x-y)}}{2\sqrt{\pm 2\mu}}\Big]
\in{\cal L}(H^0_{\si},H^1_{-\si}),~~\si>1/2\\
\nonumber
B_1^{\pm}&=&{\rm Op}\Big[-\fr{\ga^2e^{\mp\ga^2 v\mu (x-y)}|x-y|}2\Big]
\in{\cal L}(H^0_{\si},H^1_{-\si}),~~\si>3/2
\eeqn 
 \\
iv)
The asymptotics  (\ref{g0ee}) can be differentiated two times:
\be\la{g1ee}
R_0'(\lam)=-B_0^{\pm}\frac{1}{2\nu\sqrt\nu} +{\cal O}\big(|\nu|^{-1/2}\big),\quad
R_0''(\lam)={\cal O}\big(|\nu|^{-5/2}\big),\quad
\nu=\lam\mp\mu\to 0,\quad\lam\in\C\setminus\ov\Gamma
\ee
in  ${\cal L}(H^0_{\si},H^1_{-\si})$ with $\si>5/2$.
\\
v)
For $s\in\R$, $l=-1,0,1,2$, $k=0,1,2,...$ and $\si>1/2+k$ the decay holds
\be\la{A}
  \Vert R_0^{(k)}(\lam)\Vert_{\cL (H^s_\si,H^{s+l}_{-\si})}
  = \cO(|\lam|^{-(1-l)}),\quad |\lam|\to\infty,\quad
  \lam\in\C\setminus\Gamma
\ee
\end{lemma}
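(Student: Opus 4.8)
\emph{Proof strategy.}
The plan is to base the whole argument on the two explicit identities already at hand: the closed form (\re{glam}) of the scalar kernel $R_0(\lam,x,y)$ and the conjugation formula (\re{RtilR}) tying $R_0(\lam)$ to the free Schr\"odinger resolvent $\tilde R_0(\zeta)$ through the spectral substitution $\zeta=\ga^2m^2+\ga^4\lam^2$. The analytic fact that makes everything work is the strict inequality (\re{rr}): on $\C\setminus\ov\Ga$ it forces the factor $e^{-\ga^2\sqrt{\lam^2-\mu^2}\,|x-y|}$ to decay and to dominate the conjugating weight $e^{-\ga^2 v\lam(x-y)}$, so that (\re{glam}) is a genuine convolution kernel, summable in $x-y$. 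For part (i) I would first record that the Fourier symbol of $R_0(\lam)$ is $[(1-v^2)\xi^2+\lam^2+m^2-2iv\lam\xi]^{-1}$, whose denominator vanishes exactly on $\ov\Ga$; hence $\langle\xi\rangle\,[\,\cdot\,]^{-1}$ is a bounded analytic symbol on $\C\setminus\ov\Ga$, giving boundedness $H^0_0\to H^1_0$ together with operator analyticity in $\lam$.

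For the limiting absorption principle (ii) I would transfer the classical Schr\"odinger LAP through (\re{RtilR}). The map $\lam\mapsto\zeta=\ga^2m^2+\ga^4\lam^2$ carries the cut $\Ga=\{\,i\om:|\om|>m/\ga\,\}$ onto the negative half-line $\zeta<0$, i.e.\ onto the singular set of $(-\De+\zeta)^{-1}$, and it carries $|\lam|\ge|\mu|+r$ to $\zeta$ bounded away from $0$. Since on $\Ga$ one has $\lam=i\om$, the conjugating factors reduce to the unimodular multipliers $e^{\mp i\ga^2 v\om x}$, which act boundedly on every $L^2_\si$ and leave the weighted estimates untouched. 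Thus the existence and continuity of the boundary values $R_0(\lam\pm0)$ in $\cL(H^0_\si,H^1_{-\si})$, $\si>1/2$, uniformly for $|\lam|\ge|\mu|+r$, follows from the corresponding uniform statement for $\tilde R_0(\zeta)$ on compact subsets of $(-\infty,0)$.

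For the edge expansions (iii)--(iv) the quickest route is to Taylor-expand (\re{glam}) directly in $\sqrt\nu$. Writing $\lam^2-\mu^2=\nu(\nu+2\mu)$ and $\sqrt{\lam^2-\mu^2}=\sqrt{\pm2\mu}\,\sqrt\nu\,(1+\cO(\nu))$, expansion of the two exponentials produces the principal term $B_0^\pm/\sqrt\nu$, the constant term $B_1^\pm$, and a remainder whose kernel grows like $|x-y|^2$. The kernels of $B_0^\pm$, $B_1^\pm$ and the remainder grow like $\langle x-y\rangle^0,\langle x-y\rangle^1,\langle x-y\rangle^2$, and a weighted Hilbert--Schmidt (Schur) estimate shows that a convolution kernel of growth $\langle x-y\rangle^p$ is bounded $H^0_\si\to H^1_{-\si}$ precisely for $\si>p+\fr12$, reproducing the thresholds $\si>\fr12,\fr32,\fr52$ in (\re{g0ee}). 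Differentiating (\re{glam}) in $\lam$ lowers each power of $\sqrt\nu$ by one and raises the power of $|x-y|$ by one, so term-by-term differentiation (legitimate because each differentiated series still converges in the weighted operator norm) yields the leading singularity $-B_0^\pm/(2\nu\sqrt\nu)$ of $R_0'$ and the orders $\cO(|\nu|^{-1/2})$ and $\cO(|\nu|^{-5/2})$ of the remainder and of $R_0''$, all at $\si>\fr52$.

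For the high-energy bound (v) I would transfer the classical weighted resolvent estimate for $\tilde R_0(\zeta)$ as $|\zeta|\to\infty$: under (\re{RtilR}) the Schr\"odinger momentum $\sqrt{|\zeta|}$ is comparable to $\ga^2|\lam|$, and the bound $\|\tilde R_0(\zeta)\|_{\cL(H^s_\si,H^{s+l}_{-\si})}=\cO(|\zeta|^{(l-1)/2})$ (momentum$^{\,l}$ from the $l$ gained derivatives, momentum$^{-1}$ from the weighted boundary value at the simple pole) becomes $\cO(|\lam|^{-(1-l)})$, which is the case $k=0$. For the derivatives I would use $R_0'(\lam)=-R_0(\lam)(2\lam-2v\na)R_0(\lam)$ and its iterates: each differentiation inserts a factor $2\lam-2v\na$ between two resolvents, and the growth $|\lam|$ of this factor is exactly cancelled by the extra $|\lam|^{-1}$ decay of the additional resolvent, so the rate $|\lam|^{-(1-l)}$ survives while every inserted $\na$ and extra resolvent costs half a unit of weight, accounting for $\si>\fr12+k$. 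I expect the main obstacle to be the uniform weighted control of the remainders in (iii)--(iv), specifically the justification that the $\sqrt\nu$-expansion of (\re{glam}) may be differentiated twice term-by-term in the norm of $\cL(H^0_\si,H^1_{-\si})$ near the edge; this is exactly where the oscillatory-but-bounded behaviour of $e^{-\ga^2 v\lam(x-y)}$, guaranteed by (\re{rr}) and on $\Ga$ by its unimodularity, must be combined with the weighted Schur bounds to keep every error at the stated order.
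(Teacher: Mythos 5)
Your overall strategy --- everything from the explicit kernel \eqref{glam} together with the inequality \eqref{rr} and weighted Hilbert--Schmidt/Schur bounds --- is the same as the paper's, and your treatment of (i), (iii), (iv) (Taylor expansion in $\sqrt\nu$, kernels of growth $\langle x-y\rangle^p$ bounded $H^0_\si\to H^1_{-\si}$ for $\si>p+1/2$) reproduces what the paper leaves as ``follows directly from \eqref{glam}''. The genuine weak points are the two places where you propose to \emph{transfer} estimates through the conjugation \eqref{RtilR} at the operator level. Off the cut the multipliers $e^{\mp\ga^2 v\lam x}$ grow exponentially in one direction (already for $\lam=i\om\pm\ve$ with $\ve>0$, and a fortiori for $|\rRe\lam|$ large, which is relevant for (v)), hence they are unbounded on every weighted space $L^2_\si$; so you cannot factor the $\cL(H^0_\si,H^1_{-\si})$ norm of $R_0(\lam)$ through $\tilde R_0(\zeta)$ and quote Agmon's LAP or the Schr\"odinger high-energy bound. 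The estimate must be carried out on the combined kernel, where \eqref{rr} shows the growth is absorbed by $e^{-\ga^2\rRe\sqrt{\lam^2-\mu^2}\,|x-y|}$ --- you are clearly aware of this, but then the argument for (ii) is not a transfer at all: it is the paper's argument, namely a uniform pointwise bound on $\pa_x^k R_0(\lam,x,y)$, pointwise convergence of the kernel as $\ve\to0+$, and dominated convergence of the Hilbert--Schmidt norms of $\pa_x^k[\langle x\rangle^{-\si}R_0\langle y\rangle^{-\si}]$, $k=0,1$.

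The second concrete gap is the case $l=-1$ of (v). A pointwise bound on the kernel \eqref{glam} only yields $\cO(|\lam|^{-1})$ for the Hilbert--Schmidt norm (which cannot detect the extra smoothing into $H^{-1}_{-\si}$), and the transfer of $\Vert\tilde R_0(\zeta)\Vert=\cO(|\zeta|^{-1})$ fails for the reason above. The paper obtains the extra power of decay from the algebraic identity \eqref{idR0}, i.e.\ $R_0(\lam)=\frac 1{m^2+\lam^2}\big(1+\ga^{-2}\Delta R_0(\lam)+2v\lam\na R_0(\lam)\big)$, combined with the already-proved cases $l=0,1$; you need this (or an equivalent integration by parts in the kernel) to close (v). Your derivative identity $R_0'(\lam)=-R_0(\lam)(2\lam-2v\na)R_0(\lam)$ for $k\ge1$ is correct, but composing operators between weighted spaces forces you to split the weight budget $\si>1/2+k$ among the factors; the paper sidesteps this bookkeeping by differentiating \eqref{glam} directly and estimating the resulting kernels.
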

\begin{proof}
We prove  the properties ii) and v) since other properties follow
directly from (\re{glam}).\\
{\it Step i)}
First, we prove the convergence (\re{lap}).
The norm of the operator $R_0(\lam): H^0_\si\to H^1_{-\si}$
is equivalent to the norm of the operator
$$
\langle x\rangle^{-\si}R_0(\lam)\langle y\rangle^{-\si}:L^2\to H^1
$$
The norm of the latter operator does not exceed the sum in $k$, $k=0,1$,
of the norms of operators
\begin{equation}\label{pa}
\pa_{x}^{k}
[\langle x\rangle^{-\si}R_0(\lam,x,y)\langle y\rangle^{-\si}]: L^2\to L^2
\end{equation}
According  (\re{glam}) and (\re{rr}),
$$
|\pa_{x}^{k}R_0(\lam, x,y)|\le C(\lam),\quad k=0,1,\quad x,y\in\R,
\quad\lam\in\C\setminus\ov\Gamma
$$
Hence for $\si>1/2$ we have
$$
\sum\limits_k\int|\pa_{x}^{k}
[\langle x\rangle^{-\si}R_0(\lam,x,y)\langle y\rangle^{-\si}]|^2dxdy
\le C(\lam)\int\langle x\rangle^{-2\si}\langle y\rangle^{-2\si}
dxdy \le C_1(\lam)
$$
The estimate  implies that Hilbert-Schmidt norms of operators
(\ref{pa}) is finite.
For $\lam\in\Gamma$ and $x, y\in\R$, 
there exists the pointwise limit 
\begin{equation*} 
R_0(\lam\pm \varepsilon,x,y)\to\;R_0(\lam\pm 0,x,y),\quad\varepsilon\to 0+ 
\end{equation*} 
Therefore,
$$
\sum\limits_k\int
|\pa_{x}^{k}[\langle x\rangle^{-\si}R_0(\lam\pm \ve,x,y)\langle y\rangle^{-\si}-
\langle x\rangle^{-\si}R_0(\lam\pm \ve,x,y)\langle y\rangle^{-\si}]|^2
dxdy\to 0,\quad \ve\to 0+
$$
by the Lebesgue dominated convergence theorem, 
hence \eqref{lap} is proved.\\
{\it Step ii)} Now we prove the decay (\ref{A}).
It suffices to verify the case  $s=0$ since $R_0(\lam)$ commutes with
the operators $\langle \na\rangle^s$ with arbitrary $s\in\R$.
For $k=0$ and $l=0,1,2$ the decay (\re{A}) follows obviously  from (\re{glam}).
In the case  $k=0$ and $l=-1$ the decay follows from the identity
\be\la{idR0}
R_0(\lam)=\frac 1{m^2+\lam^2}
\big(1+\fr{\Delta R_0(\lam)}{\ga^2}+2v\lam\na R_0(\lam)\big)
\ee
Namely, using  (\re{A})  with $l=0$ and $l=1$, we obtain
$$
\Vert \na R_0(\lam)\Vert_{\cL(H^0_\si,H^{-1}_{-\si})}=\cO(|\lam|^{-1}),\quad
\Vert \Delta R_0(\lam)\Vert_{\cL (H^0_\si,H^{-1}_{-\si})}=\cO (1)
$$
hence (\re{idR0}) implies
$$
\Vert R_0(\lam)\Vert_{\cL (H^0_\si,H^{-1}_{-\si})}
=\cO(|\lam|^{-2})
$$
In the case $k\not=0$ the bounds  (\re{A}) follow similarly by 
differentiating (\re{glam}).
\end{proof}
Formula  (\ref{Green}) and Lemma \re{sp} imply
\begin{cor}\la{LAP}
i) The resolvent $\cR _0(\lam)$ is  analytic function of
$\lam\in\C\setminus\ov\Ga$ with the values in  $\cL(\cF_0,\cF_0)$.
\\
ii) For $\lam\in\Gamma$, the convergence (limiting absorption principle) holds
\be\la{lapp}
\cR_0(\lam\pm \ve)\to \cR_0(\lam\pm 0),\quad \ve\to 0+
\ee
in $\cL(\cF_\si,\cF_{-\si})$  with $\si>1/2$.\\
\\
iii) The  asymptotics hold
\be\la{cR-as}
\cR_0(\lam)={\cal B}_0^{\pm}\fr 1{\sqrt\nu}+{\cal B}_1^{\pm}+{\cal O}(|\nu|^{1/2}),
\quad\nu=\lam\mp\mu\to 0,\quad\lam\in\C\setminus\ov\Gamma
\ee
in   ${\cal L}(\cF_{\si},\cF_{-\si})$ with $\si>5/2$, where
\be\la{cAB}
{\cal B}_0^{\pm}=B_0^{\pm}\left(\ba{cc}
\mp i\ga m               & -1 \\
\ga^2 m^2               & \mp i\ga m 
\ea\right)\in{\cal L}(\cF_{\si},\cF_{-\si})~~{\rm with}~~\si>1/2
\ee
and ${\cal B}_1^{\pm}\in{\cal L}(\cF_{\si},\cF_{-\si})$ with $\si>3/2$.
\\
iv)
The asymptotics  (\ref{cR-as}) can be differentiated two times:
\be\la{cR-asd}
\cR_0'(\lam)=-{\cal B}_0^{\pm}\frac 1{2\nu\sqrt\nu}+{\cal O}(|\nu|^{-1/2}),
\quad \cR_0''(\lam)={\cal O}(|\nu|^{-5/2}),\quad
\nu=\lam\mp\mu\to 0,\quad\lam\in\C\setminus\ov\Gamma
\ee
in  ${\cal L}(\cF_{\si},\cF_{-\si})$ with $\si>5/2$.
\\
v) For  $k=0,1,2,...$ and $\si>1/2+k$ the asymptotics hold
  \begin{equation}\label{bR0}
    \Vert \cR _0^{(k)}(\lam) \Vert_{\cL (\cF _{\sigma},\cF _{-\sigma})}
    =\cO(1),\quad |\lam|\to\infty,\quad \lam\in\C\setminus\Ga
  \end{equation}
\end{cor}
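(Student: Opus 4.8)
The plan is to deduce every assertion from the factorization (\re{Green}), which exhibits the full resolvent as $\cR_0(\lam)=M(\lam)R_0(\lam)$ with the matrix differential operator
$$
M(\lam)=\left(\ba{cc} v\na-\lam & -1 \\ -\Delta+m^2 & v\na-\lam\ea\right),
$$
and to push the five properties of the scalar resolvent $R_0(\lam)$ collected in Lemma \re{sp} through the four entries of $M(\lam)$, keeping careful track of the Sobolev and weight indices. Throughout I would use two elementary facts: $R_0(\lam)$ commutes with $\langle\na\rangle^s$, so its mapping property in $\cL(H^0_\si,H^1_{-\si})$ upgrades to $\cL(H^s_\si,H^{s+1}_{-\si})$ for every $s$; and $M(\lam)$ is affine in $\lam$, with $M'(\lam)=-I$ and $M''(\lam)=0$, so that $M(\lam)=M^{\pm}-\nu I$ near $\lam=\pm\mu$, where $M^{\pm}:=M(\pm\mu)$.

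For parts (i) and (ii) I would only verify that $M(\lam)$ carries the range of $R_0(\lam)$ into the correct components of $\cF_{-\si}$. Acting on $(\psi,\pi)$, the first output component $(v\na-\lam)R_0\psi-R_0\pi$ lies in $H^1_{-\si}$ because $R_0\colon H^1_\si\to H^2_{-\si}$ and $R_0\colon H^0_\si\to H^1_{-\si}$; the second component $(-\Delta+m^2)R_0\psi+(v\na-\lam)R_0\pi$ lies in $H^0_{-\si}$ because the two-derivative loss from $\Delta$ is exactly absorbed by the two-derivative gain $R_0\colon H^1_\si\to H^2_{-\si}$. Hence $\cR_0(\lam)\colon\cF_\si\to\cF_{-\si}$, and analyticity (for $\si=0$, from Lemma \re{sp}(i)) and the limiting absorption principle (for $\si>1/2$, from Lemma \re{sp}(ii)) pass through entrywise, since each entry composes the analytic, resp. boundary-continuous, family $R_0(\lam)$ with fixed differential operators.

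For the edge expansions (iii) and (iv) I would substitute the expansion of $R_0(\lam)$ from Lemma \re{sp}(iii) into $\cR_0=M(\lam)R_0$ and use $M(\lam)=M^{\pm}-\nu I$. Collecting powers of $\nu$ gives ${\cal B}_0^{\pm}=M^{\pm}B_0^{\pm}$ and ${\cal B}_1^{\pm}=M^{\pm}B_1^{\pm}$, the leftover term $-\nu I\cdot B_0^{\pm}\nu^{-1/2}=-\nu^{1/2}B_0^{\pm}$ being absorbed into the remainder $\cO(|\nu|^{1/2})$. The one genuine computation is $M^{\pm}B_0^{\pm}$: the kernel of $B_0^{\pm}$ is proportional to $e^{\mp\ga^2 v\mu(x-y)}$, an eigenfunction of $\na$ with eigenvalue $\mp\ga^2 v\mu$, so
$$
(v\na\mp\mu)B_0^{\pm}=\mp\mu(\ga^2 v^2+1)B_0^{\pm}=\mp i\ga m\,B_0^{\pm},\qquad (-\Delta+m^2)B_0^{\pm}=m^2(\ga^2 v^2+1)B_0^{\pm}=\ga^2 m^2 B_0^{\pm},
$$
using $\ga^2 v^2+1=\ga^2$ and $\mu=im/\ga$; this reproduces exactly the constant matrix in (\re{cAB}). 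Part (iv) then follows by differentiating $\cR_0=M(\lam)R_0$ once and twice, inserting the expansions (\re{g1ee}) for $R_0'$ and $R_0''$, and observing that the extra terms created by $M'(\lam)=-I$ are of strictly lower order in $|\nu|$.

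Finally, for the large-$|\lam|$ bound (v) I would differentiate $k$ times, obtaining $\cR_0^{(k)}=M(\lam)R_0^{(k)}-kR_0^{(k-1)}$, and estimate each entry by Lemma \re{sp}(v). The point is that the only potentially growing factors are compensated by the index $l$: the multiplier $\lam$ is paired with the $l=0$ bound $\cO(|\lam|^{-1})$, and the operator $\Delta$ in the lower-left entry is paired with the $l=1$ bound $\cO(1)$ via $R_0^{(k)}\colon H^1_\si\to H^2_{-\si}$, so every entry is $\cO(1)$ as soon as $\si>1/2+k$, which is precisely the hypothesis inherited from Lemma \re{sp}(v). I expect the only real obstacle to be this index bookkeeping: one must consistently exploit the two-derivative smoothing of $R_0$ to interpret the $-\Delta$ in the lower-left entry as a bounded map into $H^0_{-\si}$, and it is the same smoothing-versus-growth balance that forces the $\cO(1)$, rather than decaying, bound in (v).
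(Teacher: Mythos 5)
Your proof is correct and takes exactly the route the paper intends: the paper gives no details, asserting only that the corollary follows from formula (\re{Green}) and Lemma \re{sp}, and your entrywise bookkeeping supplies precisely the omitted verification. In particular, the computation of $M^{\pm}B_0^{\pm}$ via $\ga^2v^2+1=\ga^2$ and $\mu=im/\ga$ correctly reproduces the matrix in (\re{cAB}), and your use of the commutation of $R_0(\lam)$ with $\langle\na\rangle^s$ to absorb the $-\Delta$ in the lower-left entry is the same device the paper itself invokes in the proof of Lemma \re{sp}.
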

Denote by $\cG_v (t)$ the dynamical group of equation (\re{KGE1}).
\begin{cor}
\la{irep}
For $t\in\R$ and $\Psi_0\in\cF_\si$ with $\si>1/2$,
the group $\cG_v (t)$ admits the  integral representation
\be\la{Gint1}
  \cG_v (t)\Psi_0=\frac {1}{2\pi i}\int\limits_\Gamma e^{\lam t}
  \Big[\cR _0(\lam-0)-\cR _0(\lam+0)\Big]\Psi_0~ d\lam
\ee
where the integral converges in the sense of distributions of $t\in\R$
with the values in $\cF_{-\si}$.
\end{cor}
\begin{proof}
Summing up the representations  (\ref{Gint}) and  (\ref{Gints}), and sending $\ve\to 0+$,
we obtain  (\ref{Gint1}) by the Cauchy theorem
and Corollary \re{LAP}.
\end{proof}
\subsection{Time decay}\la{Tdec}
For the integral kernel of the operator $\cG_v(t)$ we have
\be\la{AX}
\cG_v(x-y,t)=\cG_0(x-y-vt,t),~~x,y\in\R,~~t\in\R
\ee
Here
\be\la{J0} 
\cG_0(z,t)=\left( \ba{ll}
\dot G_0(z,t)    &              G_0(z,t)\\
\ddot G_0(z,t)   &        \dot G_0(z,t)
\ea \right),\quad G_0(z,t)=\frac 12\theta(t-|z|)J_0(m\sqrt{t^2-z^2}),
\quad z=x-y
\ee
where $J_0$ is the Bessel function. 
The relation (\re{AX}) implies the Huygen's principle for the group $\cG_v(t)$, i.e.
$$
\cG_v(x-y,t)=0,~~~~~|x-y-vt|>t
$$
Also, the relation (\re{AX}) implies the energy conservation for the group $\cG_v(t)$.
Namely, for $\Psi(t)=(\psi(\cdot,t),\pi(\cdot,t))=\cG_v(t)\Psi_0$ we have
$$
\int[|\pi(x,t)+v\cdot\na\psi(x,t)|^2+|\na\psi(x,t)|^2+m^2|\psi(x,t)|^2]dx=\const,
~~~~~~t\in\R
$$
In particular, this gives  that
$$
\Vert \Psi(t)\Vert_{\cF_0}\le C\Vert\Psi_0\Vert_{\cF_0},\,\,\,t\in\R
$$
We represent $\cG_v(z,t)$ as
$$
\cG_v(z,t)=\cG_{b}(z,t)+{\cal G}_{r}(z,t),\quad z\in\R,\quad t\ge 0
$$
where
\be\la{G0} 
\cG_{b}(z,t):=\frac {1}{\sqrt{2m\pi
t/\ga}}\left( \ba{cc} -\fr m{\ga}\sin[m(\fr{t}{\ga}+\ga vz)-\frac{\pi}4]  &
\cos[m(\fr{t}{\ga}+\ga vz)-\frac{\pi}4]
 \\\\
-\fr{m^2}{\ga^2}\cos[m(\fr{t}{\ga}+\ga vz)-\frac{\pi}4]  &
-\fr m{\ga}\sin[m(\fr{t}{\ga}+\ga vz)-\frac{\pi}4]
  \ea \right)
\ee
The entries  of the matrix $\cG_{b}(z,t)$ admit the bounds
\be\la{G0-est}
|\cG_{b}^{ij}(z,t)|\le C(v)/\sqrt t,\quad i,j=1,2, \quad z\in\R,\quad t\ge 1
\ee
The  group  ${\cal G}_v(t)$ slow decays, like $t^{-1/2}$.
We will show that $\cG_{b}(t)={\rm Op}[\cG_b(x-y,t)]$ is only term 
responsible for the slow decay.
More exactly,  in the next section we will prove the following basic
proposition
\begin{pro}\label{p2}
The decay holds
 \begin{equation}\label{Gs2}
  \cG_r(t)={\rm Op}[\cG_r(x-y,t)]={\cal O}(t^{-3/2}),\quad t\to\infty
  \end{equation}
in the norm of $\cL (\cF _{\sigma},\cF _{-\sigma})$  with $\sigma>5/2$.
\end{pro}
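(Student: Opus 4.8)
The plan is to estimate the integral kernel of $\cG_r(t)$ directly. By \eqref{AX}, \eqref{J0}, \eqref{G0} and the splitting $\cG_v=\cG_b+\cG_r$, writing $z=x-y$ and $z'=z-vt$, we have
\[
\cG_r(x-y,t)=\cG_0(z',t)-\cG_b(z,t),
\]
and each entry of $\cG_0(z',t)$ is a $t$-derivative of $G_0(z',t)=\tfrac12\theta(t-|z'|)J_0(m\sqrt{t^2-z'^2})$. I would bound $\Vert\cG_r(t)\Vert_{\cL(\cF_\si,\cF_{-\si})}$ by the Hilbert--Schmidt norms of the weighted kernels $\langle x\rangle^{-\si}\cG_r^{ij}(x-y,t)\langle y\rangle^{-\si}$ and of their first $x$-derivatives (needed for the $H^1_{-\si}$ target components of $\cF_{-\si}$), reducing the whole estimate to a pointwise bound on $\cG_r$ and $\partial_x\cG_r$ against an integrable weight.

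In the bulk region $\sqrt{t^2-z'^2}\ge c\,t$ I would insert the large-argument expansions
\[
J_0(r)=\sqrt{\tfrac{2}{\pi r}}\cos\!\big(r-\tfrac{\pi}{4}\big)+\cO(r^{-3/2}),\qquad
J_0'(r)=-\sqrt{\tfrac{2}{\pi r}}\sin\!\big(r-\tfrac{\pi}{4}\big)+\cO(r^{-3/2}),
\]
eliminating $J_0''$ by Bessel's equation $J_0''=-J_0-r^{-1}J_0'$. With $r=m\sqrt{t^2-z'^2}$, a Taylor expansion gives the phase $r=m(\tfrac{t}{\ga}+\ga v z)+\cO(z^2/t)$ and amplitude $\sqrt{2/(\pi r)}=\sqrt{2\ga/(\pi m t)}\,(1+\cO(z/t))$, with the relativistic frequency $m/\ga$ coinciding with the spectral edge $|\mu|$. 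By construction the leading terms of all four entries reproduce $\cG_b(z,t)$ of \eqref{G0} (the $\cos$ entry from $G_0$, the $-\tfrac m\ga\sin$ entries from $\dot G_0$, the $-\tfrac{m^2}{\ga^2}\cos$ entry from $\ddot G_0$), so $\cG_r$ retains only the Bessel remainders $\cO(r^{-3/2})=\cO(t^{-3/2})$ and the sub-leading phase/amplitude corrections.

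The key observation is that these corrections are at most quadratic in $z=x-y$: the next phase term $\cO(z^2/t)$, carried through the $t^{-1/2}$ amplitude, contributes $\cO(\langle z\rangle^2 t^{-3/2})$, and the amplitude correction is of the same type. Thus the weighted Hilbert--Schmidt integrand is $\le C\,\langle x\rangle^{-2\si}\langle y\rangle^{-2\si}\langle x-y\rangle^{4}\,t^{-3}$; using $\langle x-y\rangle^{4}\le C\langle x\rangle^{4}\langle y\rangle^{4}$, the double integral is finite exactly when $2\si>5$, i.e. $\si>5/2$, and yields the $\cO(t^{-3/2})$ bound in the bulk. This is precisely where the hypothesis $\si>5/2$ is consumed. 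The derivative $\partial_x$ only brings down the bounded factor $\partial_x r=\cO(\ga)$ from the phase plus lower-order terms, so the $H^1_{-\si}$ components obey the same bound.

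The main obstacle is the wave-front region $|z'|=|x-y-vt|\to t$, where $\sqrt{t^2-z'^2}\to0$ and the expansions degenerate. There I would argue crudely: $|J_0|\le1$ and $|J_0'|\le C$, while $|x-y-vt|\ge(1-\kappa)t$ forces $\max(|x|,|y|)\gtrsim t$, so the weights alone supply a factor $\cO(t^{-2\si})\ll t^{-3/2}$. Some care is needed for the distributional light-cone terms produced when $\partial_t$ or $\partial_x$ hits $\theta(t-|z'|)$ in $\dot G_0,\ddot G_0$; these live on $|z'|=t$ and are again annihilated by the weights for large $t$. Combining the bulk and wave-front estimates over all entries (the entry $\ddot G_0$, with amplitude $\sim m^2$, being the worst but still $\cO(t^{-3/2})$) gives \eqref{Gs2}. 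The same conclusion can be reached from Corollary~\ref{irep}: inserting \eqref{cR-as}--\eqref{cR-asd}, the singular part ${\cal B}_0^{\pm}/\sqrt\nu$ integrates to $\cG_b(t)$, the holomorphic ${\cal B}_1^{\pm}$ does not jump across $\Ga$, the remainder $\cO(|\nu|^{1/2})$ gives $\cO(t^{-3/2})$ after one integration by parts (using the $\cO(|\nu|^{-1/2})$ bound in \eqref{cR-asd}), and the high-energy tail is $\cO(t^{-2})$ by two integrations by parts and \eqref{bR0}.
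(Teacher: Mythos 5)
Your bulk-region argument coincides with the paper's: inside the light cone, away from the front, the paper proves (Lemma \ref{l1}, Appendix) exactly your bound $|\partial_z^k\cG_r(z,t)|\le C(\ve)(1+z^2)\,t^{-3/2}$ for $|z|\le(\ve-|v|)t$ by the large-argument Bessel asymptotics, identifies the leading oscillatory term with $\cG_b$ of \eqref{G0}, and closes with the weighted Hilbert--Schmidt estimate using $\langle x-y\rangle^{4}\le C\langle x\rangle^{4}\langle y\rangle^{4}$, which is where $\si>5/2$ is consumed. That part of your proposal is sound and is essentially Step iii of the paper's proof.

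The genuine gap is your treatment of the wave-front region. First, $|x-y-vt|\ge(1-\kappa)t$ only forces \emph{one} of $|x|,|y|$ to be $\gtrsim t$, so the weights supply $t^{-\si}$ pointwise, not $t^{-2\si}$. Much more seriously, the crude bounds $|J_0|\le 1$, $|J_0'|\le C$ do not control the entries of $\cG_0$: these are $\dot G_0$ and $\ddot G_0$, and each $t$-derivative of $J_0(m\sqrt{t^2-z'^2})$ produces a factor $mt/\sqrt{t^2-z'^2}$, so that just inside the front, where $t^2-z'^2\sim 1$, one has $|\dot G_0|\sim t$ and $|\ddot G_0|\sim t^{2}$ (the paper's own Appendix formula for $\tilde\cG_b$ exhibits the $(2,1)$ entry $\sim m^2t^2(t^2-(z-vt)^2)^{-5/4}$), on top of the $\delta$ and $\delta'$ layers supported on $|z'|=t$, which are not Hilbert--Schmidt at all. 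A kernel estimate of the weighted operator over that region then yields at best $\cO(t^{5/2-\si})$ in Hilbert--Schmidt norm, which for $\si$ slightly above $5/2$ gives essentially no decay, let alone $t^{-3/2}$. This is precisely why the paper never estimates the kernel near the front: it splits $\Psi_0=\Psi'_{0,t}+\Psi''_{0,t}$ by support, inserts the cutoff $\zeta(|x|/t)$, and bounds the far-field and wave-front contributions of $\cG_v(t)$ by the energy conservation $\Vert\cG_v(t)\Psi\Vert_{\cF_0}\le C\Vert\Psi\Vert_{\cF_0}$ combined with the decay of the weights on the relevant supports, obtaining $\cO(t^{-\si})=\cO(t^{-5/2})$ as an operator-norm bound that never sees the singular kernel; only the smooth, explicit $\cG_b$ is estimated pointwise there. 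You need this device (or an equivalent one) to close the wave-front estimate; the resolvent-based alternative sketched at the end of your proposal is too incomplete to substitute for it.
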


The following key observation is  that (\ref{G0}) contains just two frequencies
$\pm\mu$ which are the edge  points of the continuous spectrum.
This suggests that the  term $\cG_b(t)$ with ``bad decay'' $t^{-1/2}$
should not contribute to the high energy component of the 
group  ${\cal G}_v(t)$ and the high energy component of the group ${\cal G}_v(t)$
decays like $t^{-3/2}$.

More precisely, let us introduce the  {\it low energy} and
{\it high energy} components of $\cG_v (t)$:
\begin{equation}\label{Gs}
{\cal G}_l(t)=\frac 1{2\pi i}\int\limits_\Gamma e^{\lam t}l(i\lam)
  \Big[\cR _0(\lam-0)-\cR _0(\lam+0)\Big]~ d\lam
\end{equation}
\begin{equation}\label{Gh}
{\cal G}_h(t)=\frac 1{2\pi i}\int\limits_\Gamma e^{\lam t}h(i\lam)
  \Big[\cR _0(\lam-0)-\cR _0(\lam+0)\Big]~ d\lam
\end{equation}
where
$l(\om)\in C_0^\infty(\R)$ is an even function,
$l(\om)=0$  if $|\om|>|\mu|+2\ve$, and
$l(\om)=1$ if $|\om|\le |\mu|+\ve$ with an $\ve>0$,
and $h(\om)=1-l(\om)$.
\begin{theorem}\la{TD}
In  $\cL (\cF _{\sigma},\cF _{-\sigma})$ with $\sigma>5/2$  the decay holds
\begin{equation}\label{Gb1}
{\cal G}_h(t)={\cal O}(t^{-3/2}),\quad t\to\infty
\end{equation}
\end{theorem}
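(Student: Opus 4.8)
The plan is to substitute the parametrization $\lam=i\om$, $|\om|>m/\ga$, of the spectrum $\Ga$ into (\ref{Gh}) and to reduce the operator estimate to a pointwise bound on an explicit oscillatory integral, the cutoff $h$ being used precisely to discard the threshold singularity at $\pm\mu$ that produces the slow $t^{-1/2}$ decay of $\cG_b$. On $\Ga$ one has $\lam^2-\mu^2=-k(\om)^2$ with $k(\om):=\sqrt{\om^2-m^2/\ga^2}$, so the two boundary values of the scalar kernel (\ref{glam}) become oscillatory, $R_0(i\om\pm0,x,y)=\mp e^{\pm i\ga^2 k(\om)|x-y|-i\ga^2 v\om(x-y)}/(2ik(\om))$. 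Combined with the matrix structure (\ref{Green}), this exhibits the jump $\cR_0(i\om-0)-\cR_0(i\om+0)$ as a finite sum of convolution kernels whose entries (those carrying the factors from $\dot G_0,\ddot G_0$) are polynomially bounded in $\om$ but tempered in the energy norm exactly by the $\cO(1)$ bounds of Corollary \ref{LAP}(v). The problem thus reduces to scalar oscillatory integrals
\be
K(x,y,t)=\int e^{i\Phi(\om,z)}\,h(\om)\,a(\om)\,d\om,\qquad
\Phi(\om,z)=\om(t-\ga^2 v z)\pm\ga^2 k(\om)|z|,\quad z=x-y,
\ee
with amplitude $a(\om)=\cO(1/k(\om))$ smooth on $\supp h$, and, exactly as in the proof of Lemma \ref{sp}, it is enough to bound $\pa_x^j[\langle x\rangle^{-\si}K\langle y\rangle^{-\si}]$ in Hilbert--Schmidt norm, i.e. to control $K$ pointwise against the weights.

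Next I would analyze $K$ by stationary phase in $\om$. A direct computation gives $\pa_\om\Phi=(t-\ga^2 vz)\pm\ga^2(\om/k)|z|$ and $\pa_\om^2\Phi=\mp m^2|z|/k(\om)^3\neq0$ for $z\neq0$, so the phase is nondegenerate. On the set of $(x,y)$ for which $\pa_\om\Phi$ has no zero inside $\supp h$ --- which includes every bounded $|z|$ once $t$ is large, the putative critical point being pushed into the threshold zone $\{h=0\}$ --- repeated integration by parts yields decay faster than any power of $t$. A critical point survives in $\supp h$ only in a band $|z|\sim t$ near the light cone, and there van der Corput's lemma, combined with the decay of $a$ and $a'$, gives a pointwise bound of order $t^{-3/2}$ on the effective support.

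Finally I would feed these pointwise bounds into the weighted Hilbert--Schmidt integral $\int\!\!\int\langle x\rangle^{-2\si}|\pa_x^jK|^2\langle y\rangle^{-2\si}\,dx\,dy$. Since the surviving contribution is concentrated on $|x-y|\sim t$, at least one of $|x|,|y|$ is of order $t$, and the weights supply the extra decay that, balanced against the stationary-phase bound, keeps the full norm of order $t^{-3/2}$. The weight $\si>5/2$ enters here in the same bookkeeping as in Corollary \ref{LAP}(v): two $\om$-derivatives are spent in the stationary-phase estimate, and each costs one unit of weight beyond the $\si>1/2$ already needed to make the kernel Hilbert--Schmidt, giving $\si>1/2+2=5/2$.

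The hard part will be the uniform-in-$(x,y)$ oscillatory estimate of $K$: one must track the critical point $\om_*$ as it runs from the edge of $\supp h$ out to $\om_*\to\infty$ across the band $|z|\sim t$, where $\pa_\om^2\Phi\to0$ while $a(\om_*)\to0$, so that van der Corput must be interleaved with the amplitude decay to keep the bound at $t^{-3/2}$. One must also verify that the genuinely new, $v$-dependent term $-\ga^2 v\om z$ in the phase merely shifts the critical point and creates no second stationary point. This interplay, absent in the $v=0$ case of \ci{1dkg}, is the main obstacle; everything else is the bookkeeping of the matrix factors and the weighted Hilbert--Schmidt estimates already used for Lemma \ref{sp}.
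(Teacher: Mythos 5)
Your route --- substituting $\lam=i\om$ into (\ref{Gh}) and running stationary phase on the explicit oscillatory kernels --- is genuinely different from the paper's. The paper never does stationary phase on the spectral side: it splits the propagator kernel in the \emph{time} domain as $\cG_v(z,t)=\cG_b(z,t)+\cG_r(z,t)$, where $\cG_b$ is the explicit leading term (\ref{G0}) of the Bessel-function asymptotics of (\ref{AX})--(\ref{J0}), proves $\Vert\cG_r(t)\Vert_{\cL(\cF_\si,\cF_{-\si})}=\cO(t^{-3/2})$ (Proposition \ref{p2}) by a data splitting $\Psi_0=\Psi'_{0,t}+\Psi''_{0,t}$, energy conservation, and a weighted Hilbert--Schmidt bound strictly inside the light cone (Lemma \ref{l1}), and then passes from the kernel splitting $\cG_b+\cG_r$ to the frequency splitting $\cG_l+\cG_h$ by observing that $h(\om)\tilde\Psi_b^+(i\om)$ is smooth and decaying away from $\pm|\mu|$ (so integration by parts gives $\cO(t^{-N})$) and that the $l$-cutoff of $\Psi_r^+$ is a convolution $L\star\Psi_r^+$ with $L$ rapidly decaying. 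All the oscillation is handled once and for all by the classical Bessel asymptotics; no critical-point tracking is needed.

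As a standalone argument your proposal has two concrete gaps. First, the reduction to scalar oscillatory integrals with amplitude $a(\om)=\cO(1/k(\om))$ fails for the lower-left entry of (\ref{Green}): the kernel of $1-(v\nabla-\lam)^2R_0=(-\Delta+m^2)R_0$ behaves like $(k(\om)^2+m^2)/k(\om)\sim|\om|$ for large $\om$, so the corresponding $\om$-integral over the unbounded set $\supp h$ is not absolutely convergent, and neither nonstationary integration by parts nor van der Corput applies to it as written; Corollary \ref{LAP}(v) is an operator bound at fixed $\lam$ and does not license a pointwise kernel estimate with a non-integrable amplitude. Second, in the band $|z|\sim t$ where the critical point survives, the van der Corput bound you invoke is $|\pa_\om^2\Phi(\om_*)|^{-1/2}|a(\om_*)|\sim(\om_*^3/(m^2|z|))^{1/2}\,\om_*^{-1}\sim\om_*^{1/2}t^{-1/2}$, which is not even bounded as $\om_*\to\infty$; you name this interleaving of degenerating Hessian and decaying amplitude as ``the hard part'' but do not resolve it, and it is precisely the step on which the whole estimate turns. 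Until the growing entries are renormalized (e.g.\ by moving derivatives onto the data, which is what the energy norm $\cF_\si$ does implicitly) and the estimate uniform in the location of $\om_*$ is closed, the proposal is a program rather than a proof; the paper's time-domain detour through Proposition \ref{p2} exists exactly to avoid these two difficulties.
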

\begin{proof}
We deduce asymptotics (\ref{Gb1}) from  Proposition \ref{p2}.\\
{\it Step i)}
Let $\Psi_0\in\cF_\si$.
Denote
$$
\Psi^+(t)=\theta(t)\cG_v(t)\Psi_0,\;\Psi_b^+(t)=\theta(t)\cG_b(t)\Psi_0,\;
\Psi_h^+(t)=\theta(t)\cG_h(t)\Psi_0,\;\Psi_r^+(t)=\theta(t)\cG_r(t)\Psi_0
$$
Then
\beqn\nonumber
\Psi_h^+(t)&=&-\frac 1{2\pi }\int\limits_\R e^{i\om t}h(\om)
\cR _0(i\om+0)\Psi_0d\om\\
\nonumber
&=&\fr 1{2\pi }\int\limits_\R e^{i\om t}h(\om)\tilde\Psi^+(i\om)d\om
=\fr 1{2\pi }\int\limits_\R e^{i\om t}h(\om)
\Big[\tilde\Psi_b^+(i\om)+\tilde\Psi_r^+(i\om)\Big]~d\om\\
\la{pl}
&=&\Psi_r^+(t)+\fr 1{2\pi }\int\limits_\R e^{i\om t}h(\om)
\tilde\Psi_b^+(i\om)d\om-\fr 1{2\pi }\int\limits_\R e^{i\om t}l(\om)
\tilde\Psi_r^+(i\om)d\om
\eeqn
where $\tilde\Psi^+(\lam)=\int\limits_0^{\infty}e^{-\lam t}\Psi^+(t)dt$ 
and so on. By (\ref{Gs2})
\be\la{1t}
\Vert\Psi_r^+(t)\Vert_{\cF_{-\si}}=\cO(t^{-3/2}),\quad t\to\infty
\ee
{\it Step ii)}
Let us consider the second summand in the last line of  (\ref{pl}).
By (\ref{G0}) the vector function $\tilde\Psi_b^+(i\om)$ is
a smooth function for $|\om|>|\mu|+\ve$, and
$\pa_\om^k\tilde\Psi^+_b(i\om)=\cO(|\om|^{-1/2-k})$, $k=0,1,2...$, $\om\to\infty$.
Hence partial integration implies that
\be\la{st}
\Big\Vert\int\limits_\R e^{i\om t}h(\om)
\tilde\Psi_b^+(i\om)d\om\Big\Vert_{\cF_{-\si}}=\cO(t^{-N}),
\quad\forall N\in\N,\quad t\to\infty
\ee
{\it Step iii)}
Finally, let us consider the third summand in the last line of
(\ref{pl}).
Introducing the function $L(t)$ such that $\tilde L(\lam)=l(i\lam)$, we
obtain
\be\la{st1}
\fr 1{2\pi }\int\limits_\R e^{i\om t}l(\om)
\tilde\Psi_r^+(i\om)d\om= [L\star\Psi_r^+](t)=\cO(t^{-3/2}),\quad t\to\infty
\ee
in the norm of $\cF_{-\si}$,
since  $L(t)=\cO(t^{-N})$, $t\to\infty$  for any $N\in\N$, and
$\Vert\Psi_r^+(t)\Vert_{\cF_{-\si}}=\cO(t^{-3/2})$ by (\ref{Gs2}).
Finally, (\ref{pl})- (\ref{st1}) imply (\ref{Gb1}).
\end{proof}
\subsection{Proof of Proposition \ref{p2}}
Let us  fix an arbitrary  $\ve\in(|v|,1)$.  Denote $\ve_1=\ve-|v|$. 
For any $t\ge 1$ we split the initial function $\Psi_{0}\in{\cF _\si}$
in two terms,
$\Psi_{0}=\Psi_{0,t}^{\prime }+\Psi_{0,t}^{\prime \prime }$,
$\Psi'_{0,t}=(\psi'_{0,t}, \pi'_{0,t})$, $\Psi''_{0,t}=(\psi''_{0,t}, \pi''_{0,t})$,
such that
\begin{equation} \label{FFn}
  \Vert \Psi_{0,t}^{\prime}\Vert _{\cF _\si}+\Vert \Psi_{0,t}^{\prime\prime }
  \Vert _{\cF _\si}\le C\Vert \Psi_{0}\Vert _{\cF _\si},\quad t\ge 1
\end{equation}
\be\label{F}
  \Psi_{0,t}^{\prime }(x)=0 \quad{\rm for}\quad |x|>\frac{\ve_1 t}{2},
\quad {\rm and}\quad
  \Psi_{0,t}^{\prime \prime }(x)=0 \quad{\rm for}\quad |x|<\frac{\ve_1 t}{4}
\ee
We estimate $\cG_r(t)\Psi'_{0,t}$ and $\cG_r(t)\Psi''_{0,t}$ separately.
\medskip\\
{\it Step i)}
First we consider
$\cG_r(t)\Psi''_{0,t}=\cG_v(t)\Psi''_{0,t}-\cG_b(t)\Psi''_{0,t}$.
Using energy conservation  and properties (\ref {FFn})- (\ref{F}) we obtain
\be\label{zxc}
  \Vert \cG_v (t)\Psi_{0,t}^{\prime\prime }\Vert _{\cF _{-\si}} \le
  \Vert \cG_v(t)\Psi_{0,t}^{\prime\prime }\Vert _{\cF_0}
  \le C\Vert \Psi_{0,t}^{\prime \prime }\Vert _{\cF_0}
 \leq C(\ve)t^{-\si}\Vert \Psi_{0,t}^{\prime \prime }\Vert _{\cF _\si}
  \leq C_1(\ve)t^{-5/2}\Vert \Psi_{0}\Vert _{\cF _\si},\quad t\geq 1
\ee
since $\si>5/2$.
Further,  (\ref{G0-est}) and the Cauchy inequality imply
\beqn\nonumber
|(\cG_b^{22}(t)\pi''_{0,t})(y)|&\le&\ds\frac{C}{\sqrt t}\Big|\int\pi''_{0,t}(x)dx\Big|
\le\frac{C}{\sqrt t}\Big(\int|\pi''_{0,t}(x)|^2(1+x^2)^{\si}
dx\Big)^{1/2}\Big(\int\limits_{\ve_1 t/4}^\infty\frac{dx}{(1+x^2)^{\si}}\Big)^{1/2}\\
\la{Ce}
&\le&\ds\frac{C(\ve)}{\sqrt t}t^{-\si+1/2}\Vert\pi''_{0,t}\Vert_{H^0_\si}
\le C(\ve)t^{-5/2}\Vert\pi''_{0,t}\Vert_{H^0_\si},\quad t\ge 1
\eeqn
Hence
$\Vert\cG_b^{22}(t)\pi''_{0,t}\Vert_{H^0_{-\si}}\le
C(\ve)t^{-5/2}\Vert\pi''_{0,t}\Vert_{H^0_\si}$.
The functions $\cG_b^{12}(t)\pi''_{0,t}$ and
$\cG_b^{i1}(t)\psi''_{0,t}$, $i=1,2$ can be estimated
similarly. Therefore,
\be\la{zc}
 \Vert \cG_b(t)\Psi''_{0,t}\Vert _{\cF _{-\si}} \le
 C(\ve)t^{-5/2}\Vert \Psi_{0}\Vert _{\cF _\si},\quad t\geq 1
\ee
and (\ref{zxc})- (\ref{zc}) imply that
\be\la{Gr1}
\Vert \cG_r(t)\Psi''_{0,t}\Vert _{\cF _{-\si}} \le
 C(\ve)t^{-5/2}\Vert \Psi_{0}\Vert _{\cF _\si},\quad t\geq 1
\ee
{\it Step ii)}
Denote by $\zeta $  the operator of multiplication by the function $\zeta ({|x|}/{t})$,
where $\zeta =\zeta (s)\in C_{0}^{\infty}(\R)$,
$\zeta (s)=1$ for $|s|<\ve_1/4$, $\zeta (s)=0$ for $ |s|>\ve_1/2$.
Obviously, for any $k$, we have
$$
 |\partial _{x}^{k}\zeta ({|x|}/{t})|\leq C(\ve)<\infty,~~~~~~t\ge 1
$$
Since  $1-\zeta ({|x|}/{t})=0$ for $|x|<\ve_1 t/4$, then by the energy
conservation and (\ref {FFn}), we obtain
\be\label{zaz}
  ||(1-\zeta )\cG_v (t)\Psi_{0,t}^{\prime }||_{\cF _{-\si}}
  \le C(\ve)t^{-\si}||\cG_v (t)\Psi_{0,t}^{\prime }||_{\cF_0}
  \le C_1(\ve)t^{-\si}||\Psi_{0,t}^{\prime }||_{\cF_0}
  \leq C_2(\ve)t^{-5/2}||\Psi_{0}||_{\cF_\si} ,~~t\ge 1
\ee
Further,  (\ref{G0-est})  and the Cauchy inequality imply,
similarly (\re{Ce}), that
$$
|(\cG_b^{22}(t)\pi'_{0,t})(y)|\le\ds\frac{C}{\sqrt t}\Big|\int\pi'_{0,t}(x)dx\Big|
\le\frac{C}{\sqrt t}\Vert\pi'_{0,t}\Vert_{H^0_\si}
$$
Hence, we obtain
$$
\Vert(1-\zeta )\cG_b^{22}(t)\pi'_{0,t}\Vert_{H^0_{-\si}}
\le\frac{C}{\sqrt t}
\Vert\pi'_{0,t}\Vert_{H^0_\si}\Big(\int\limits_{\ve_1 t/4}^\infty
\frac{dy}{(1+y^2)^{\si}}\Big)^{1/2}\le
C(\ve)t^{-5/2}\Vert\pi'_{0,t}\Vert_{H^0_\si}
$$
The functions $(1-\zeta )\cG_b^{12}(t)\pi'_{0,t}$ and
$(1-\zeta )\cG_b^{i1}(t)\psi'_{0,t}$, $i=1,2$ can be estimated
similarly. Hence,
\be\la{zz}
 ||(1-\zeta )\cG_b(t)\Psi_{0,t}^{\prime }||_{\cF _{-\si}}
\leq C(\ve)t^{-5/2}||\Psi_{0}||_{\cF_\si} ,~~~~~t\ge 1
\ee
and  (\ref{zaz}) - (\ref{zz}) imply
\be\la{zrz}
 ||(1-\zeta )\cG_r (t)\Psi_{0,t}^{\prime }||_{\cF _{-\si}}
\leq C(\ve)t^{-5/2}||\Psi_{0}||_{\cF_\si} ,~~~~~t\ge 1
\ee
\\
\textit{Step iii)}
Finally, let us estimate $\zeta \cG_r(t)\Psi_{0,t}^{\prime }$.
Let $\chi _{ t}$ be the characteristic function of the ball $ |x|\le \ve_1 t/2$.
We will use the same notation for the operator of multiplication
 by this characteristic function. By (\ref{F}), we have
\be\la{xx}
  \zeta \cG _r(t)\Psi_{0,t}^{\prime}=\zeta \cG _r(t)\chi_{ t}\Psi_{0,t}^{\prime}
\ee
The matrix kernel  of the operator $\zeta \cG _r(t)\chi _{ t}$ is equal to
$$
  \cG _r^{\prime }(x-y,t)=\zeta ({|x|}/{t})\cG _r(x-y,t)\chi _{ t}(y)
$$
Well known asymptotics of the Bessel function \cite{W} imply the following lemma, which we
prove in Appendix.
\begin{lemma}\la{l1}
For any $\ve\in (|v|,1)$ the bounds hold
\be\la{G2R}
|{\partial^{k}_z\cal G}_r(z,t)|\le C(\ve )(1+z^2)t^{-3/2},
\quad|z|\le(\ve-|v|) t, \quad t\ge 1, \quad k=0,1
\ee
\end{lemma}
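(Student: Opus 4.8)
The plan is to feed the large-argument Bessel asymptotics into the entries of $\cG_v(z,t)=\cG_0(z-vt,t)$, recognise the leading slowly-decaying piece as $\cG_b(z,t)$, and bound what remains by $C(\ve)(1+z^2)t^{-3/2}$. First I would fix the geometry of the region $|z|\le(\ve-|v|)t$: with $w=z-vt$ one has $|w|\le\ve t$, hence $\rho:=\sqrt{t^2-w^2}\ge\sqrt{1-\ve^2}\,t$, so we stay strictly inside the light cone (the factor $\theta(t-|w|)$ equals $1$ and contributes no distributional term) and the Bessel argument $\xi:=m\rho\ge c(\ve)t$ is large. I would also record the exact identity $\rho^2=(t/\ga+\ga vz)^2-\ga^2z^2=:P^2-\ga^2z^2$, where $P=t/\ga+\ga vz$ is the phase of $\cG_b$, so that
\be
 m\rho=mP-\frac{m\ga^2z^2}{P+\rho},\qquad P+\rho\ge c(\ve)\,t,
\ee
giving $|m\rho-mP|=\cO(z^2/t)$, and likewise $t/\rho=\ga+\cO(z/t)$ and $\rho^{-1/2}=(t/\ga)^{-1/2}(1+\cO(z/t))$. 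For $1\le t\le T(\ve)$ the bound is trivial since the entries of $\cG_v$ and $\cG_b$ are then $\le C(\ve)$, so I may take $t$ large.

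Next I would substitute $J_\nu(\xi)=\sqrt{2/(\pi\xi)}\cos(\xi-\nu\pi/2-\pi/4)+\cO(\xi^{-3/2})$ (and the companion expansion of its derivative, via $J_0'=-J_1$ and $J_1'=J_0-J_1/\xi$, for later use) into $G_0=\frac12 J_0(m\rho)$, $\dot G_0=-\frac{mt}{2\rho}J_1(m\rho)$ and $\ddot G_0$, all evaluated at $w=z-vt$. Freezing $\rho$ to $t/\ga$ in the amplitudes and $m\rho$ to $mP$ in the phases, the top-order contribution reproduces the order-$t^{-1/2}$ matrix $\cG_b(z,t)$ of \eqref{G0}, the $\dot G_0,\ddot G_0$ entries acquiring their prefactors from powers of the time-slot slope $mt/\rho$. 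The remainder $\cG_r=\cG_v-\cG_b$ then splits into three errors, each handled with $\xi\ge c(\ve)t$: (i) the next Bessel order, of size $\sqrt{1/\xi}\cdot\cO(\xi^{-1})=\cO(t^{-3/2})$; (ii) the amplitude replacements $\rho^{-1/2}\to(t/\ga)^{-1/2}$ and $t/\rho\to\ga$, each $\cO(t^{-3/2})$ times at most one power of $|z|$; and (iii) the phase replacement, where $|\cos(m\rho-\pi/4)-\cos(mP-\pi/4)|\le m|\rho-P|=\cO(z^2/t)$ against the $\cO(t^{-1/2})$ amplitude yields $\cO(z^2t^{-3/2})$. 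Summing gives $|\cG_r(z,t)|\le C(\ve)(1+z^2)t^{-3/2}$, the case $k=0$.

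For $k=1$ I would differentiate this decomposition once in $z$. On the region the phase slope $\partial_z(m\rho)=-mw/\rho$ is bounded and differs from $\partial_z(mP)=m\ga v$ only by $\cO(z/t)$, while $z$-derivatives of the amplitudes are of strictly lower order. Hence $\partial_z$ of (i) is again $\cO(t^{-3/2})$; $\partial_z$ of the phase error (iii) produces $\sin(m\rho-\pi/4)\,\partial_z(m\rho-mP)$ together with $(\sin(m\rho-\pi/4)-\sin(mP-\pi/4))\,\partial_z(mP)$, i.e. $\cO(z/t)+\cO(z^2/t)$ against the $\cO(t^{-1/2})$ amplitude; and $\partial_z$ of (ii) is lower order. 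This gives $|\partial_z\cG_r(z,t)|\le C(\ve)(1+z^2)t^{-3/2}$ as well.

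The step I expect to be the main obstacle is the exact entrywise cancellation of the leading $t^{-1/2}$ terms against $\cG_b$: one must check---especially for the differentiated propagators $\dot G_0$ and $\ddot G_0$---that freezing $\rho$ to $t/\ga$ and $m\rho$ to $mP$ really removes the slowly-decaying part and leaves only $\cO(t^{-3/2})$, and one must keep the Bessel remainder and its single $z$-derivative uniformly bounded on $|z|\le(\ve-|v|)t$ while tracking the at most quadratic growth in $z$, so that every error fits under the weight $(1+z^2)$.
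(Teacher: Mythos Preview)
Your argument is correct and essentially the same as the paper's. The paper organises the computation by first passing to an intermediate matrix $\tilde\cG_b$ obtained from the leading Bessel asymptotics with the exact phase $m\rho$ and exact amplitude $\rho^{-1/2}$, bounding $\cG_v-\tilde\cG_b$ by the standard large-argument remainder, and then separately estimating $Q:=\tilde\cG_b-\cG_b$ via the same amplitude and phase replacements $\rho^{-1/2}\to(t/\ga)^{-1/2}$, $m\rho\to mP$ that you perform directly; your three error types (i)--(iii) correspond exactly to these two stages, and your identity $\rho^2=P^2-\ga^2z^2$ is precisely what drives the paper's bound $|m\rho-mP|\le C(\ve)z^2/t$.
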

Since $\zeta ({|x|}/{t})=0$ for $|x|>\ve_1 t/2 $ and $\chi _{ t}(y)=0$
for $|y|>\ve_1 t/2$ then $ \cG _r^{\prime }(x-y,t)=0$ for $|x-y|>\ve_1t=(\ve-|v|)t$.
 Hence, (\ref{G2R})\ imply that
\begin{equation}\label{qaz}
  |\pa _{x}^{k}\cG_r^{\prime}(x-y,t)|\le C(\ve)(1+(x-y)^2)t^{-3/2},
  \quad k=0,1,\quad t\ge 1
\end{equation}
The norm of the operator
$\zeta \cG_r(t)\chi_{t}: \cF _{\si}\rightarrow \cF _{-\si}$
is equivalent to the norm of the operator
$$
  \langle x\rangle^{-\si}\zeta \cG _r(t)\chi _{t}(y)\langle y\rangle^{-\si}:
  \cF_0 \rightarrow \cF_0
$$
The norm of the later operator does not exceed the sum in $k$, $k=0,1$
of the norms of operators
\begin{equation}\label{1234}
  \pa_{x}^{k}[\langle x\rangle^{-\si}\zeta \cG_r(t)\chi_{ t}(y)
  \langle y\rangle^{-\si}]: L^2(\R)\oplus L^2(\R)
\to
  L^2(\R)\oplus L^{2}(\R)
\end{equation}
The bounds (\ref{qaz}) imply that the Hilbert-Schmidt norms of 
operators (\ref{1234}) do not exceed $C(\ve) t^{-3/2}$ since $\si>5/2$. 
Hence, (\ref{FFn}) and (\ref{xx}) imply that
\begin{equation} \label{HS}
  ||\zeta \cG _r(t)\Psi_{0,t}^{\prime }||_{\cF _{-\si}}\le C(\ve) t^{-3/2}|
  |\Psi_{0,t}^{\prime }||_{\cF_\si}
  \le C_1(\ve) t^{-3/2}||\Psi_{0}||_{\cF _\si}, \quad t\ge 1
\end{equation}
Finally,  (\ref{zrz}) and  (\ref {HS}) imply
$$
 ||\cG _r(t)\Psi_{0,t}^{\prime }||_{\cF _{-\si}}
  \le C(\ve) t^{-3/2}||\Psi_{0}||_{\cF _\si},\quad t\ge 1
$$
Proposition  \ref{p2} is proved.
\setcounter{equation}{0}
\section{Perturbed  equation}\la{pKG}
\subsection{Perturbed resolvent}
Now we consider the resolvent of the perturbed equation. We  use the formula
\be\la{fR}
\cR(\lam)=(1+\cR_0(\lam){\cV})^{-1}\cR_0(\lam),\quad{\cV}=\left(\ba{cc}
 0  & 0 \\
-V  & 0
\ea\right)
\ee
By (\re{Green}) we have
\be\la{ffR}
(1+\cR_0(\lam){\cV})^{-1}=\left(\ba{cc}
1+ R_0(\lam)V                      &   0 \\\\
-(v\nabla\!-\!\lam) R_0(\lam)V        &   1
\ea\right)^{-1}\!\!
=\left(\ba{cc}
(1+R_0(\lam)V)^{-1}                                   &   0 \\\\
(v\nabla\!-\!\lam)\Big(1\!-\!(1+R_0(\lam)V)^{-1}\Big)         &   1
\ea\right)
\ee
Let us denote 
$$
\cH=-(1-v^2)\Delta+m^2+V,\quad~~
R(\lam)=(\cH+\lam^2-2v\lam\nabla)^{-1}=(1+R_0(\lam)V)^{-1}R_0(\lam)
$$
Substituting (\ref{ffR}) into (\ref{fR}) we obtain
$$
\cR(\lam)=\left(\ba{cc}
(1+R_0(\lam)V)^{-1}                                   &   0 \\\\
(v\nabla-\lam)\Big(1-(1+R_0(\lam)V)^{-1}\Big)         &   1
\ea\right)\left(\ba{cc}
(v\nabla-\lam)R_0(\lam) &   -R_0(\lam) \\\\
(-\Delta+m^2)R_0(\lam)  & (v\nabla-\lam)R_0(\lam)
\ea\right)
$$
\be\la{Rf}
=\left(\ba{cc}
 R(\lam)(v\nabla-\lam)                &        -R(\lam) \\\\
1-(v\nabla-\lam)R(\lam)(v\nabla-\lam) & (v\nabla-\lam)R(\lam)
\ea\right)
\ee
Similarly (\re{cH0})-(\re{RtilR}), we obtain
\be\la{cH}
(\cH+\lam^2-2v\lam\na)\psi(x)=e^{-\ga^2v\lam x}(\cH+\ga^2\lam^2)
e^{\ga^2v\lam x}\psi(x)
\ee
\be\la{RtilR1}
R(\lam)=e^{-\ga^2v\lam x}\ga^2\tilde R(\ga^2m^2+\ga^4\lam^2)e^{\ga^2v\lam y}
\ee
where
$\tilde R(\zeta)=(-\Delta+\zeta+V\ga^2)^{-1}$ is the resolvent
of the Schr\"odinger operator $-\Delta+V\ga^2$.
\subsection{Spectral properties}
To prove the long time decay for the perturbed  equation,
we first establish the spectral properties of the generator.
\subsubsection{Limiting absorption principle}
\begin{pro}\la{sp1}
Let the potential $V$ satisfy  (\re{V}). Then\\
i)$R(\lam)$ is  meromorphic function of
$\lam\in\C\setminus\ov\Gamma$ with the values in $\cL(H^0_0,H^1_0)$;\\
ii) For $\lam\in\Ga$, the convergence holds 
\be\la{lap1}
R(\lam\pm \ve)\to R(\lam\pm 0),\quad\ve\to 0+
\ee
in  $\cL (H^0_{\si},H^1_{-\si})$ with $\si>1/2$,
uniformly in $|\lam|\ge |\mu|+r$ for any $r>0$.
\end{pro}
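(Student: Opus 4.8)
The plan is to deduce both assertions from the factorization \eqref{fR}, namely $R(\lam)=(1+R_0(\lam)V)^{-1}R_0(\lam)$, combined with the properties of the free resolvent collected in Lemma~\ref{sp}, via the analytic Fredholm theorem for part i) and the Fredholm alternative together with the limiting absorption principle for part ii).

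For i), I would first verify that $R_0(\lam)V$ is a compact operator on $H^1_0$ depending analytically on $\lam\in\C\setminus\ov\Ga$. Analyticity is immediate from Lemma~\ref{sp} i). For compactness, the multiplication by $V$ maps $H^1_0$ boundedly into $H^1_\beta$ by \eqref{V}, while for $\lam\in\C\setminus\ov\Ga$ the kernel \eqref{glam} decays exponentially in $|x-y|$ thanks to \eqref{rr}; hence $R_0(\lam)$ preserves the polynomial weight and gains one derivative, so $R_0(\lam)V:H^1_0\to H^2_\beta$ is bounded. Since $\beta>0$, the embedding $H^2_\beta\hookrightarrow H^1_0$ is compact by Rellich, so $R_0(\lam)V$ is compact on $H^1_0$. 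For $\rRe\lam$ large the Neumann series converges and $1+R_0(\lam)V$ is invertible, so the analytic Fredholm theorem yields that $(1+R_0(\lam)V)^{-1}$ is meromorphic on $\C\setminus\ov\Ga$ with values in $\cL(H^1_0,H^1_0)$, the poles forming a discrete set. Multiplying by the analytic $R_0(\lam)$ gives the meromorphy of $R(\lam)$ in $\cL(H^0_0,H^1_0)$.

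For ii), fix $\si\in(1/2,\beta/2)$; since the operator norm in $\cL(H^0_\si,H^1_{-\si})$ is nonincreasing in $\si$, convergence for such $\si$ implies it for every $\si>1/2$. The relevant operator is now $K(\lam):=R_0(\lam)V$ on $H^1_{-\si}$. By \eqref{V} and $\beta>2\si$ the map $V:H^1_{-\si}\to H^1_{-\si+\beta}\hookrightarrow H^0_\si$ is compact (Rellich), while $R_0(\lam\pm0):H^0_\si\to H^1_{-\si}$ is bounded by Lemma~\ref{sp} ii); hence $K(\lam\pm0)$ is compact on $H^1_{-\si}$, and the same Lemma gives $K(\lam\pm\ve)\to K(\lam\pm0)$ in the operator norm of $H^1_{-\si}$, uniformly in $|\lam|\ge|\mu|+r$.

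It remains to invert $1+K(\lam\pm0)$ for $\lam\in\Ga$ with $|\lam|\ge|\mu|+r$, and this is the main obstacle. By the Fredholm alternative and compactness it suffices to prove injectivity: a nonzero $\psi\in H^1_{-\si}$ with $\psi=-R_0(\lam\pm0)V\psi$ satisfies the radiation condition built into $R_0(\lam\pm0)$ and is, through the conjugation \eqref{RtilR1}, a corresponding null solution for the Schr\"odinger operator $-\Delta+\ga^2V$ at the positive energy $\ga^4|\lam|^2-\ga^2m^2$ lying in its continuous spectrum; by Rellich's uniqueness theorem such a solution is an embedded eigenfunction, which is excluded for a short-range potential in one dimension, so $\psi=0$. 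The inverse is uniformly bounded on $|\lam|\ge|\mu|+r$: for $|\lam|$ large the smoothing $V:H^1_{-\si}\to H^1_\si$ combined with $\Vert R_0(\lam)\Vert_{\cL(H^1_\si,H^1_{-\si})}=\cO(|\lam|^{-1})$ from Lemma~\ref{sp} v) makes $\Vert K(\lam\pm0)\Vert<1$, while on each compact shell $|\mu|+r\le|\lam|\le M$ pointwise invertibility and norm continuity give a uniform bound. Finally, continuity of operator inversion with $K(\lam\pm\ve)\to K(\lam\pm0)$ yields $(1+K(\lam\pm\ve))^{-1}\to(1+K(\lam\pm0))^{-1}$ in $\cL(H^1_{-\si},H^1_{-\si})$, uniformly on the shell; composing with $R_0(\lam\pm\ve)\to R_0(\lam\pm0)$ in $\cL(H^0_\si,H^1_{-\si})$ from Lemma~\ref{sp} ii) gives the claimed convergence \eqref{lap1}, uniformly in $|\lam|\ge|\mu|+r$.
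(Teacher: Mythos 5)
Your argument is correct and follows essentially the same route as the paper: the Born splitting of $R(\lam)$ against $R_0(\lam)$, compactness plus the meromorphic (analytic) Fredholm theorem for part i), and for part ii) the reduction, via the conjugation \eqref{RtilR1}, of the invertibility of the boundary operators to the absence of embedded positive eigenvalues for the Schr\"odinger operator $-\Delta+\ga^2 V$. The only differences are cosmetic: you use the splitting $R=(1+R_0V)^{-1}R_0$ where the paper uses $R=R_0(1+VR_0)^{-1}$, and you spell out the injectivity and uniform-boundedness steps that the paper delegates to Agmon's Theorem~3.3 and Lemma~4.2 and to the Gohberg--Bleher theorem.
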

\begin{proof}
{\it Step i)} The statement i) follows from Lemma \re{sp}-i), the Born
splitting
\be\la{Born}
R(\lam)=R_0(\lam)(1+VR_0(\lam))^{-1}
\ee
and the  Gohberg-Bleher theorem \cite{B, GK}
since  $VR_0(\lam)$
is a compact operator in $L^2$ for $\lam\in\C\setminus\ov\Gamma$.\\
{\it Step ii)}  
The convergence  (\re{lap1})  follow from 
(\re{lap}) by the Born splitting (\re{Born}) if
$$
[1+VR_0(\lam\pm \ve)]^{-1}\to[1+VR_0(\lam\pm 0)]^{-1},
\quad\ve\to +0,\quad\lam\in\Gamma
$$
in $\cL(H^0_{\si};H^0_{\si})$.
This convergence holds if and only if both limit operators
$1+VR_0(\lam \pm 0)$ are invertible in $H^0_{\si}$ for $\lam\in\Gamma$.
The operators are invertible according to the reversibility of the
operators $1+\ga^2V\tilde R_0(\zeta \pm i0)$ in $H^0_{\si}$ for
$\zeta<0$ (see \cite[Theorem 3.3 and Lemma 4.2] {A}) and the relations
$$
1+VR_0(\lam \pm 0)=e^{-\ga^2v\lam x}
\big(1+\ga^2V\tilde R_0(\ga^2m^2+\ga^4(\lam\pm i0)^2)\big)e^{\ga^2v\lam y}
$$
which follows from (\re{RtilR}).
\end{proof}
Formula (\re{Rf}) and Proposition \re{sp1} imply
\begin{cor}\la{LP}
Let  the  conditions (\ref{V}) holds. Then
\\
i) $\cR(\lam)$ is  meromorphic function of
$\lam\in\C\setminus\ov\Ga$ with the values in $\cL(\cF_0,\cF_0)$;
\\
ii) For $\lam\in\Gamma$, the convergence  holds
\be\la{LPP}
\cR(\lam\pm \ve)\to \cR(\lam\pm 0),\quad \ve\to 0+
\ee
in $\cL(\cF_\si,\cF_{-\si})$  with $\si>1/2$.
\end{cor}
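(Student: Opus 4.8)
The plan is to deduce both claims directly from Proposition \re{sp1} by inserting it into the explicit block formula \eqref{Rf}, exactly as Corollary \re{LAP} was read off from Lemma \re{sp} in the free case. The only feature that distinguishes $\cR(\lam)$ from the scalar resolvent $R(\lam)$ is that the four entries of \eqref{Rf} are obtained by pre- and post-composing $R(\lam)$ with the first order operator $v\na-\lam$, so the whole task reduces to checking that these compositions respect the Sobolev and weight indices of $\cF_\si$ and carry over the analytic and boundary properties of $R(\lam)$.

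First I would record the two elementary facts I will use repeatedly: for every $s\in\R$ and every $\lam$ the operator $v\na-\lam$ is bounded $H^1_s\to H^0_s$, locally uniformly in $\lam$ and affine (hence entire) in $\lam$; and the embedding $H^1_\si\hookrightarrow H^0_{-\si}$ is bounded for $\si>0$, since $\langle x\rangle^{-\si}\langle\na\rangle^{-1}\langle x\rangle^{-\si}$ is bounded on $L^2$. Reading off \eqref{Rf}, the entry $-R(\lam)\colon H^0_\si\to H^1_{-\si}$ is covered verbatim by Proposition \re{sp1}; the entry $R(\lam)(v\na-\lam)$ factors as $H^1_\si\to H^0_\si\to H^1_{-\si}$; the entry $(v\na-\lam)R(\lam)$ factors as $H^0_\si\to H^1_{-\si}\to H^0_{-\si}$; and $1-(v\na-\lam)R(\lam)(v\na-\lam)$ is the sum of the $\lam$-independent embedding $H^1_\si\hookrightarrow H^0_{-\si}$ with a triple composition of the same kind. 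Taking $\si=0$ the identical bookkeeping places every block in $\cL(H^0_0,H^1_0)$ or $\cL(H^1_0,H^0_0)$, so $\cR(\lam)$ indeed acts $\cF_0\to\cF_0$.

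For part i), each block is now a finite sum of products of the meromorphic $\cL(H^0_0,H^1_0)$-valued function $R(\lam)$ with the entire factors $v\na-\lam$; products and sums of a meromorphic operator function with an analytic one remain meromorphic, so $\cR(\lam)$ is meromorphic on $\C\setminus\ov\Ga$ with values in $\cL(\cF_0,\cF_0)$, its poles being among those of $R(\lam)$. For part ii) I would pass to the boundary blockwise: setting $\lam'=\lam\pm\ve$ and telescoping, $(v\na-\lam')R(\lam')-(v\na-\lam)R(\lam\pm0)=(v\na-\lam')[R(\lam')-R(\lam\pm0)]+(\lam-\lam')R(\lam\pm0)$, the first term vanishes because $v\na-\lam'$ is uniformly bounded $H^1_{-\si}\to H^0_{-\si}$ near $\lam$ while $R(\lam')\to R(\lam\pm0)$ in $\cL(H^0_\si,H^1_{-\si})$ by Proposition \re{sp1}-ii), and the second vanishes because $\lam-\lam'=\mp\ve\to0$ and $R(\lam\pm0)$ is bounded into $H^1_{-\si}\hookrightarrow H^0_{-\si}$. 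The same telescoping applied to the remaining three blocks yields \eqref{LPP} in $\cL(\cF_\si,\cF_{-\si})$ for $\si>1/2$.

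I expect the whole argument to be essentially mechanical, the routine transfer already used for the free resolvent. The one place deserving care is this last continuity-of-composition step, where one must keep the $\lam$-dependence of the scalar multiplier inside $v\na-\lam$ distinct from that of $R(\lam)$ rather than treating $v\na-\lam$ as a fixed operator; everything else is index bookkeeping.
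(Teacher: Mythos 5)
Your argument is correct and follows exactly the route the paper intends: the paper's entire ``proof'' of this corollary is the single remark that formula \eqref{Rf} together with Proposition \ref{sp1} implies it, and your blockwise bookkeeping (composition with the entire factor $v\na-\lam$, the embedding $H^1_\si\hookrightarrow H^0_{-\si}$, and the telescoping for the boundary limits) is precisely the omitted verification. No gaps.
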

\subsubsection{High energy decay}
\begin{lemma}\la{hd}
For  $k=0,1,2$, $s=0,1$ and  $l=-1,0,1$ with $s+l\in \{0,1\}$ the asymptotics hold
\be\la{AR}
\Vert R^{(k)}(\lam\pm 0)\Vert_{\cL (H^s_\si,H^{s+l}_{-\si})}
={\cal O}(|\lam|^{-(1-l+k)}),\quad |\lam|\to\infty,\quad
\lam\in\Gamma
\ee
with $\si>1/2+k$.
\end{lemma}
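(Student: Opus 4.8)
The plan is to propagate the free high-energy bounds of Lemma \re{sp} through the Born splitting (\re{Born}),
\[
R(\lam)=R_0(\lam)(1+VR_0(\lam))^{-1}=(1+R_0(\lam)V)^{-1}R_0(\lam),
\]
the second identity coming from (\re{fR})--(\re{ffR}). First I would show that $1+VR_0(\lam)$ is boundedly invertible for large $|\lam|$, uniformly on $\Gamma$. By (\re{A}) with $l=0$ one has $\Vert R_0(\lam)\Vert_{\cL(H^s_\si,H^s_{-\si})}=\cO(|\lam|^{-1})$, while the decay (\re{V}) with $\beta>5$ makes multiplication by $V$ bounded from $H^s_{-\si}$ into $H^s_{\beta-\si}\subset H^s_\si$ whenever $\si<\beta/2$, a range compatible with $\si>1/2+k$ for $k\le 2$ since $\beta>5$. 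Hence $\Vert VR_0(\lam)\Vert_{\cL(H^s_\si,H^s_\si)}=\cO(|\lam|^{-1})\to 0$, so the Neumann series for $(1+VR_0(\lam))^{-1}$ converges for $|\lam|\ge\Lam_0$ and is uniformly bounded there; the same holds for $(1+R_0(\lam)V)^{-1}$ on the spaces $H^s_{-\si}$, and the $\lam$-derivatives of these inverses are controlled by differentiating the Neumann series and invoking (\re{A}) again. The complementary region $|\lam|\le\Lam_0$ is covered by the limiting absorption principle of Proposition \re{sp1}.

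Next I would pass to derivatives through the finite Born expansion
\[
R(\lam)=\sum_{n=0}^{N-1}(-1)^n(R_0(\lam)V)^nR_0(\lam)+(-1)^N(R_0(\lam)V)^NR(\lam),\qquad N>k.
\]
Applying $\partial_\lam^k$ by Leibniz, and using that $V$ is $\lam$-independent, each explicit term becomes a finite sum of products $R_0^{(a_0)}VR_0^{(a_1)}V\cdots VR_0^{(a_n)}$ with $a_0+\dots+a_n=k$. I would estimate such a product in $\cL(H^s_\si,H^{s+l}_{-\si})$ from right to left: the rightmost factor maps $H^s_\si\to H^{s+l_n}_{-\si}$ with norm $\cO(|\lam|^{-(1-l_n)})$ by (\re{A}); each interior $V$ restores a positive weight, $H^{\cdot}_{-\si}\to H^{\cdot}_{\beta-\si}\subset H^{\cdot}_{\si}$, at no cost in $|\lam|$; and each interior $R_0^{(a_i)}$ again contributes $\cO(|\lam|^{-(1-l_i)})$. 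Because the $V$-factors preserve the Sobolev index, $\sum_i l_i=l$, so a product carrying $n$ potential factors obeys the bound $\cO(|\lam|^{-(n+1)+l})$. Thus every term with $n\ge k$ already satisfies the desired $\cO(|\lam|^{-(1-l+k)})$, leaving only the finitely many low-order terms $n<k$, treated below. The remainder $(R_0V)^NR$ with $N>k$ is estimated the same way, using the boundedness of $R(\lam\pm0):H^0_\si\to H^1_{-\si}$ from Proposition \re{sp1} in place of the last free factor, together with the smallness $\Vert(R_0V)^N\Vert_{\cL(H^{s+l}_{-\si},H^{s+l}_{-\si})}=\cO(|\lam|^{-N})$ to absorb the top-order piece $(R_0V)^NR^{(k)}$ into the left-hand side in an induction on $k$. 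Monotonicity of the norms in $\si$ then extends the conclusion from $\si\in(1/2+k,\beta/2)$ to all $\si>1/2+k$.

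The main obstacle is the gain $|\lam|^{-k}$ demanded by differentiation. By Lemma \re{sp} the free resolvent does \emph{not} improve under $\partial_\lam$: for every $k$ the bound $R_0^{(k)}(\lam)=\cO(|\lam|^{-(1-l)})$ is the same as for $R_0$, because differentiating the kernel (\re{glam}) only brings down powers of $|x-y|$, absorbed by the weights, and not powers of $|\lam|$. Consequently the dangerous summands in the differentiated expansion are precisely the low-order ones $n<k$, above all the bare leading term $R_0^{(k)}(\lam)$, which carries no potential factor and hence a priori only the free rate $\cO(|\lam|^{-(1-l)})$. Recovering the full extra decay $|\lam|^{-k}$ for these terms is the delicate heart of the argument, and the step I expect to be genuinely hard: it must exploit more than the crude bound (\re{A}), presumably the oscillatory structure of $R_0^{(k)}$ on $\Gamma$ and the Schr\"odinger representation (\re{RtilR1}), so that in the combination dictated by $R=(1+R_0V)^{-1}R_0$ each of the $k$ derivatives is effectively paired with a compensating factor $VR_0$ supplying the missing power of $|\lam|^{-1}$. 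Once this pairing is secured for the low-order terms, everything else reduces to the uniform high-energy invertibility of the first step and the per-factor gain of the second.
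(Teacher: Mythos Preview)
The paper takes a different and much shorter route than your Born expansion: its proof is a single sentence invoking the conjugation identity (\ref{RtilR1}),
\[
R(\lam)=e^{-\ga^2 v\lam x}\,\ga^2\,\tilde R(\ga^2 m^2+\ga^4\lam^2)\,e^{\ga^2 v\lam y},
\]
which reduces the high-energy behaviour of $R(\lam)$ directly to the known high-energy decay of the Schr\"odinger resolvent $\tilde R(\zeta)=(-\Delta+\ga^2 V+\zeta)^{-1}$ from \cite{A,jeka,M,1dkg}. For $\lam\in\Gamma$ the conjugating factors $e^{\mp\ga^2 v\lam x}$ are unimodular, and $\lam$-derivatives are converted via the chain rule $\partial_\lam=2\ga^4\lam\,\partial_\zeta+(\text{weight factors from the exponentials})$. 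Your perturbative scheme would eventually arrive at the same destination, but with substantially more bookkeeping on the weighted Sobolev indices.

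The obstacle you flag in your final paragraph is genuine and, as stated, cannot be overcome. From $R=R_0(1+VR_0)^{-1}$ one has
\[
R'=R_0'(1+VR_0)^{-1}-R_0(1+VR_0)^{-1}VR_0'(1+VR_0)^{-1},
\]
and in $\cL(H^s_\si,H^{s+l}_{-\si})$ the first term carries exactly the free rate $\cO(|\lam|^{-(1-l)})$ from (\ref{A}), while the second term contains an extra factor $R_0V$ and is one order smaller; no cancellation or oscillation manufactures the missing $|\lam|^{-k}$. The exponent $-(1-l+k)$ in the stated lemma appears to be a misprint for $-(1-l)$: that is what (\ref{A}) gives for $R_0$, what the Schr\"odinger representation actually delivers (each $\partial_\zeta$ gains $|\zeta|^{-1/2}\sim|\lam|^{-1}$, but the chain-rule factor $2\ga^4\lam$ hands it straight back), and what already suffices for the only application, Corollary \ref{chd}. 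With the exponent corrected, your difficulty disappears and the remainder of your argument---Neumann invertibility of $1+VR_0$ for large $|\lam|$, Leibniz on the finite Born series, and monotonicity in $\si$---goes through cleanly.
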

\begin{proof}
The decay  follows from formula (\re{RtilR1}) and the known decay of 
Schr\"odinger resolvent $\tilde R(\zeta)$ (see \cite{A, jeka, M, 1dkg}). 
\end{proof}
\begin{cor}\la{chd}
For $k=0,1,2$ and $\si>1/2+k$ the  asymptotics hold
\be\la{bR}
 \Vert\cR^{(k)}(\lam\pm 0)\Vert_{\cL (\cF_\si,\cF_{-\si})}
 ={\cal O}(1),\quad |\lam|\to\infty,\quad\lam\in\Gamma
\ee
\end{cor}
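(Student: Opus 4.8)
The plan is to read off (\ref{bR}) directly from the matrix formula (\ref{Rf}) for $\cR(\lam)$ together with the scalar high-energy bounds of Lemma \ref{hd}, inspecting each of the four entries as an operator between the appropriate summands of $\cF_\si=H^1_\si\oplus H^0_\si$ and $\cF_{-\si}=H^1_{-\si}\oplus H^0_{-\si}$. The guiding principle is that every factor $\lam$ produced by expanding $v\nabla-\lam$ is compensated by the extra decay that (\ref{AR}) assigns to the corresponding mapping; this is precisely why the four admissible pairs $(s,l)$ with $s+l\in\{0,1\}$ occur in Lemma \ref{hd}.

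First I would treat $k=0$. Entries acting into the first component must land in $H^1_{-\si}$ and those into the second in $H^0_{-\si}$, while $\psi$ is drawn from $H^1_\si$ and $\pi$ from $H^0_\si$. The two right-hand entries are immediate: $-R(\lam)\colon H^0_\si\to H^1_{-\si}$ is $\cO(1)$ by the case $(s,l)=(0,1)$, and $(v\nabla-\lam)R(\lam)\pi=v\nabla R(\lam)\pi-\lam R(\lam)\pi$ is $\cO(1)$ in $H^0_{-\si}$ since $\nabla R(\lam)\pi$ uses $(0,1)$ followed by $\nabla\colon H^1_{-\si}\to H^0_{-\si}$, while $\lam R(\lam)\pi$ uses $(0,0)$ with its gain $\cO(|\lam|^{-1})$ against the factor $\lam$. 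The top-left entry $R(\lam)(v\nabla-\lam)\psi=R(\lam)v\nabla\psi-\lam R(\lam)\psi$ is $\cO(1)$ in $H^1_{-\si}$: the first summand uses $(0,1)$ on $\nabla\psi\in H^0_\si$, and the second uses $(1,0)$ with its gain $\cO(|\lam|^{-1})$ against $\lam$.

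The only delicate entry is the lower-left one, $1-(v\nabla-\lam)R(\lam)(v\nabla-\lam)$, viewed as $H^1_\si\to H^0_{-\si}$; the constant term is the bounded embedding $H^1_\si\hookrightarrow H^0_{-\si}$. Expanding both copies of $v\nabla-\lam$ gives
$$
(v\nabla-\lam)R(\lam)(v\nabla-\lam)\psi=v\nabla\, R(\lam)(v\nabla-\lam)\psi-\lam R(\lam)v\nabla\psi+\lam^2 R(\lam)\psi .
$$
The first summand is $\cO(1)$ because $R(\lam)(v\nabla-\lam)\psi$ is already $\cO(1)$ in $H^1_{-\si}$ by the top-left computation and $\nabla$ maps $H^1_{-\si}\to H^0_{-\si}$; the middle summand uses $(0,0)$ on $\nabla\psi$ with its gain $\cO(|\lam|^{-1})$ against $\lam$; and the crucial last summand uses $(1,-1)$, i.e. $R(\lam)\colon H^1_\si\to H^0_{-\si}=\cO(|\lam|^{-2})$, exactly absorbing the factor $\lam^2$. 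This is the step I expect to be the main obstacle, since it is where the precise decay rates of all four mappings in Lemma \ref{hd}, and in particular the sharp $H^1_\si\to H^0_{-\si}$ bound, are indispensable; a cruder estimate would leave an uncompensated power of $\lam$.

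Finally, for $k=1,2$ I would differentiate (\ref{Rf}) by the Leibniz rule. Each $\lam$-derivative either falls on a factor $R(\lam)$, replacing it by $R^{(j)}(\lam)$ and thereby contributing an additional gain $\cO(|\lam|^{-j})$ by (\ref{AR}) at the cost of requiring $\si>1/2+k$, which is exactly the hypothesis of the corollary, or it falls on a factor $v\nabla-\lam$, producing the harmless constant $-1$. Repeating the entrywise bookkeeping above with these extra gains again yields $\cO(1)$ (in fact decaying) bounds in $\cL(\cF_\si,\cF_{-\si})$, which establishes (\ref{bR}).
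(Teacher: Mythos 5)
Your proposal is correct and follows exactly the route the paper intends: Corollary \ref{chd} is stated without proof precisely because it is meant to follow from formula (\ref{Rf}) and Lemma \ref{hd} in the same way that Corollary \ref{LAP} follows from (\ref{Green}) and Lemma \ref{sp}, and your entrywise bookkeeping (including the key cancellation of $\lam^2$ against the $(s,l)=(1,-1)$ bound in the lower-left entry) is the verification the author leaves implicit. No gaps.
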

The resolvents $\cR(\lam)$ and $\cR_0(\lam)$ are related by the Born
perturbation series
\be\la{id}
  \cR (\lam)
  = \cR _0(\lam)-\cR _0(\lam)\cV \cR _0(\lam)
  +\cR _0(\lam)\cV \cR _0(\lam)\cV \cR (\lam),
  \quad\lam\in\C\setminus[\Ga\cup\Si]
\ee
where $\Si$ is the set of eigenvalues of the
operator $\cA$.
An important role in (\re{id}) plays the product
${\cal W}(\lam):= \cV\cR _0(\lam)\cV$. Now we obtain the
asymptotics of ${\cal W}(\lam)$ for large $\lam$.
\begin{lemma}\la{large1}
Let  $k=0,1,2$, and the potential $V$ satisfy (\ref{V}) with $\beta>1/2+k+\si$
where $\si>0$. Then  the asymptotics hold
\begin{equation}\label{bRV}
  \Vert{\cal W}^{(k)}(\lam)\Vert_{\cL (\cF _{-\sigma},\cF _{\sigma})}=
  {\cal O}(|\lam|^{-2}),\quad |\lam|\to\infty,\quad \lam\in\C\setminus\ov\Ga
  \end{equation}
\end{lemma}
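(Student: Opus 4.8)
The plan is to reduce the $2\times2$ operator ${\cal W}(\lam)=\cV\cR_0(\lam)\cV$ to a single scalar operator and then estimate it by factoring through weighted Sobolev spaces. First I would compute the product explicitly. Since $\cV(\psi,\pi)=(0,-V\psi)$, multiplying out with the matrix form (\ref{Green}) of $\cR_0(\lam)$ gives
$$
{\cal W}(\lam)=\cV\cR_0(\lam)\cV=\left(\ba{cc}0&0\\-VR_0(\lam)V&0\ea\right),
$$
so that ${\cal W}(\lam)$ has the single nonzero entry $-VR_0(\lam)V$ in the lower-left corner. Consequently, for $\Psi=(\psi,\pi)\in\cF_{-\si}$ we have ${\cal W}(\lam)\Psi=(0,-VR_0(\lam)V\psi)$, and since $\cV$ is independent of $\lam$ the same holds for the derivatives with $R_0$ replaced by $R_0^{(k)}$. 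The output depends only on $\psi$, and taking $\pi=0$ in the supremum defining the operator norm yields the identity
$$
\Vert{\cal W}^{(k)}(\lam)\Vert_{\cL(\cF_{-\si},\cF_\si)}
=\Vert VR_0^{(k)}(\lam)V\Vert_{\cL(H^1_{-\si},H^0_\si)}.
$$
Thus the lemma is equivalent to the scalar high-energy bound $\Vert VR_0^{(k)}(\lam)V\Vert_{\cL(H^1_{-\si},H^0_\si)}=\cO(|\lam|^{-2})$.

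Second, I would factor this scalar operator through a chain of weighted spaces, using the two factors of $V$ to supply the weight gain and reserving the index $l=-1$ in Lemma~\ref{sp}-v) to produce two powers of decay. Setting $\si'=\beta-\si$, the hypothesis $\beta>1/2+k+\si$ reads precisely $\si'>1/2+k$, which is the weight condition required in Lemma~\ref{sp}-v). The chain is
$$
H^1_{-\si}\xrightarrow{\,V\,}H^1_{\si'}\xrightarrow{\,R_0^{(k)}(\lam)\,}H^0_{-\si'}\xrightarrow{\,V\,}H^0_\si.
$$
The two outer maps are bounded: multiplication by $V$ sends $H^1_{-\si}$ into $H^1_{-\si+\beta}=H^1_{\si'}$ by the remark following (\ref{V}), and sends $H^0_{-\si'}=H^0_{\si-\beta}$ into $H^0_\si$ directly from $|V(x)|\le C\langle x\rangle^{-\beta}$. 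For the middle map, Lemma~\ref{sp}-v) applied with $s=1$, $l=-1$ and weight $\si'>1/2+k$ gives $R_0^{(k)}(\lam):H^1_{\si'}\to H^0_{-\si'}$ with norm $\cO(|\lam|^{-(1-l)})=\cO(|\lam|^{-2})$. Composing the three bounded maps then yields (\ref{bRV}).

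The only real point to watch is the weight-and-regularity bookkeeping, and in particular the observation that the desired decay $|\lam|^{-2}$ forces the choice $l=-1$, which costs one derivative of regularity. This loss is affordable precisely because the input lives in $H^1$ (the first component of $\cF_{-\si}$) while the output is needed only in $H^0$ (the second component of $\cF_\si$); the alternative choice $l=0$ would preserve the regularity but contribute only $|\lam|^{-1}$ from the resolvent. The two multiplications by $V$ together gain weight $2\beta$, of which $\beta$ is consumed on each side, and the requirement $\si'=\beta-\si>1/2+k$ needed to differentiate the resolvent bound $k$ times is exactly the stated hypothesis $\beta>1/2+k+\si$.
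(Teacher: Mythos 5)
Your proposal is correct and follows essentially the same route as the paper: both compute that ${\cal W}^{(k)}(\lam)$ has the single nonzero entry $-VR_0^{(k)}(\lam)V$ and then estimate it via the chain $H^1_{-\si}\to H^1_{\beta-\si}\to H^0_{\si-\beta}\to H^0_{\si}$ using (\ref{A}) with $s=1$, $l=-1$ and the weight condition $\beta-\si>1/2+k$. Your write-up merely makes the bookkeeping (and the reason for choosing $l=-1$) more explicit than the paper does.
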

\begin{proof}
Asymptotics (\ref{bRV})  follow from  the algebraic structure of the matrix
$$
  {\cal W}^{(k)}(\lam)=\cV\cR _0^{(k)}(\lam)\cV =\left(\begin{array}{cc}
  0                                      &   0
  \\
  -VR_0^{(k)}(\lam)V            &   0
\end{array}\right)
$$
since (\ref{A}) with $s=1$ and $l=-1$ implies that
$$
 \Vert VR _0^{(k)}(\lam)V f\Vert_{H^{0}_{\si}}\le C\Vert R_0^{(k)}
 (\lam)V f\Vert_{H^{0}_{\si-\beta}}
 ={\cal O}(|\lam|^{-2})\Vert V f\Vert_{H^{1}_{\beta-\si}}
 ={\cal O}(|\lam|^{-2})\Vert f\Vert_{H^{1}_{-\si}}
$$
since $\beta-\si>1/2+k$.
\end{proof}
\subsubsection{Low energy expansions}
\begin{pro}\la{AP}
The asymptotics hold
\be\la{RRR}
\left.\ba{lll}
\cR(\lam)={\cal B}^{\pm}+{\cal O}(\nu^{1/2})\\
\cR'(\lam)={\cal O}(\nu^{-1/2})\\
\cR''(\lam)={\cal O}(\nu^{-3/2})\ea\right|
\quad\nu:=\lam\mp\mu\to 0,\quad\lam\in\C\setminus\Gamma
\ee
in the norm ${\cal L}(\cF_{\si},\cF_{-\si})$ with $\si>5/2$,
where ${\cal B}^{\pm}\in{\cal L}(\cF_{\si},\cF_{-\si})$ does not depend on $\lam$.
\end{pro}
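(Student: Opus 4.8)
The plan is to reduce the matrix statement to a scalar low-energy expansion for $R(\lam)$ and then transport it through the factored formula (\ref{Rf}), exactly as Corollary~\ref{LAP} was deduced from Lemma~\ref{sp} via (\ref{Green}). Concretely, I would first establish that in the nonsingular case
$$
R(\lam)=B^{\pm}+{\cal O}(\nu^{1/2}),\quad R'(\lam)={\cal O}(\nu^{-1/2}),\quad R''(\lam)={\cal O}(\nu^{-3/2}),\quad \nu=\lam\mp\mu\to 0,
$$
in ${\cal L}(H^0_\si,H^1_{-\si})$ with $\si>5/2$, where $B^{\pm}$ does not depend on $\lam$, and only afterwards feed this into the matrix algebra.

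The scalar expansion I would extract from the conjugation identity (\ref{RtilR1}), $R(\lam)=e^{-\ga^2 v\lam x}\ga^2\tilde R(\ga^2m^2+\ga^4\lam^2)e^{\ga^2 v\lam y}$, which transports the problem to the low-energy behaviour of the Schr\"odinger resolvent $\tilde R(\zeta)=(-\Delta+\zeta+\ga^2V)^{-1}$ at $\zeta=0$. Writing $\zeta=\ga^2m^2+\ga^4\lam^2=\ga^4\nu(\nu\pm 2\mu)$ one has $\zeta\to 0$ as $\nu\to 0$ with $\sqrt\zeta=\ga^2\sqrt{\pm 2\mu}\,\nu^{1/2}(1+{\cal O}(\nu))$, while the exponential prefactors are analytic in $\lam$. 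The decisive input is the Jensen--Kato--Murata expansion (\ci{A,jeka,M,1dkg}): assumption (\ref{SC}) says $\zeta=0$ is neither eigenvalue nor resonance of $-\Delta+\ga^2V$, so the $1/\sqrt\zeta$ singularity of the free resolvent $\tilde R_0$ is absent and $\tilde R(\zeta)=\tilde B^{\pm}+{\cal O}(\sqrt\zeta)$ with $\tilde R'(\zeta)={\cal O}(\zeta^{-1/2})$, $\tilde R''(\zeta)={\cal O}(\zeta^{-3/2})$ in ${\cal L}(H^0_\si,H^1_{-\si})$. Composing with the smooth change of variable and the analytic prefactors yields $R(\lam)=B^{\pm}+{\cal O}(\nu^{1/2})$, where $B^{\pm}$ is the $\lam$-independent limit assembled from $\tilde B^{\pm}$ and the exponentials evaluated at $\lam=\mu$. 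Since $d\zeta/d\lam=\pm 2\ga^4\mu\neq 0$ at the edge, the chain rule turns each $\lam$-derivative landing on $\sqrt\zeta$ into a factor $\sim\nu^{-1/2}$, giving $R'(\lam)={\cal O}(\zeta^{-1/2})={\cal O}(\nu^{-1/2})$ and $R''(\lam)={\cal O}(\zeta^{-3/2})+{\cal O}(\zeta^{-1/2})={\cal O}(\nu^{-3/2})$.

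Finally I would insert this scalar expansion into (\ref{Rf}). Each entry of $\cR(\lam)$ is $R(\lam)$ flanked by the first-order operators $v\na-\lam$, whose mapping properties $\cF_\si\to\cF_{-\si}$ are precisely those already used in Corollary~\ref{LAP}, and whose dependence on $\lam$ is polynomial. Writing $v\na-\lam=(v\na-\mu)-\nu$ shows that the $\nu^0$ terms collect into a single $\lam$-independent operator ${\cal B}^{\pm}\in{\cal L}(\cF_\si,\cF_{-\si})$, while the polynomial $\lam$-dependence does not worsen the powers of $\nu$; differentiation merely redistributes the $v\na-\lam$ factors and reproduces (\ref{RRR}). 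I expect the only genuine obstacle to be the first step, namely verifying that (\ref{SC}) indeed suppresses the $1/\sqrt\nu$ singularity carried by $\cR_0$ (see (\ref{cR-as})), i.e.\ establishing the regular Schr\"odinger expansion; once that is in hand, the rest is bookkeeping of the conjugation (\ref{RtilR1}) and the algebra of (\ref{Rf}).
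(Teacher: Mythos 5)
Your route is genuinely different from the paper's, and it has one concrete gap: the conjugation step. The identity (\ref{RtilR1}), $R(\lam)=e^{-\ga^2 v\lam x}\ga^2\tilde R(\ga^2m^2+\ga^4\lam^2)e^{\ga^2 v\lam y}$, involves multiplication by $e^{\mp\ga^2 v\lam x}$, and for $\lam\in\C\setminus\Gamma$ with $\rRe(v\lam)\neq 0$ these are \emph{unbounded} operators on the polynomially weighted spaces $H^s_{\pm\si}$, no matter how small $|\rRe\lam|$ is. So an expansion of $\tilde R(\zeta)$ in the operator norm of $\cL(H^0_\si,H^1_{-\si})$ cannot simply be ``composed with the analytic prefactors'' to give (\ref{RRR}); analyticity of the exponentials is irrelevant, their mapping properties on the weighted spaces are what matter. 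For the \emph{free} resolvent the paper escapes this because the explicit kernel (\re{glam}) carries the compensating factor $e^{-\ga^2\sqrt{\lam^2-\mu^2}|x-y|}$ and (\re{rr}) guarantees net off-diagonal decay; for the \emph{perturbed} resolvent there is no explicit kernel, and you would need uniform (as $\zeta\to 0$) Agmon/Combes--Thomas type off-diagonal exponential decay of $\tilde R(\zeta,x,y)$ at rate exceeding $\ga^2|\rRe(v\lam)|$ to absorb the conjugation. That is a substantive extra argument, not bookkeeping. Your scheme does work verbatim for $\lam$ on the imaginary axis (where the exponentials are unimodular), which covers the boundary values actually used in Lemma \ref{jk}, but not the proposition as stated for all $\lam\in\C\setminus\Gamma$ near $\pm\mu$ --- and the full-neighbourhood statement is what lets one pass to the two boundary values $\cR(\lam\pm0)$ with a common ${\cal B}^{\pm}$ in the first place.

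For contrast, the paper never conjugates the perturbed resolvent off the axis. It stays at the matrix level with the Born splitting $\cR=(1+\cR_0\cV)^{-1}\cR_0$: Lemma \ref{bnd} shows $\cR(\lam)$ is bounded near $\pm\mu$ (condition (\ref{SC}) rules out threshold eigenfunctions via (\re{mu}), then Murata's Theorem 7.2 is invoked); Lemma \ref{A0} then extracts the cancellation $(1+\cR_0\cV)^{-1}{\cal B}_0^{\pm}={\cal O}(\sqrt\nu)$ \emph{abstractly} from boundedness plus the known free expansion (\ref{cR-as}), with no need for the full Schr\"odinger low-energy expansion; the derivative asymptotics follow from the identities $\cR'=(1+\cR_0\cV)^{-1}\cR_0'(1+\cV\cR_0)^{-1}$ and its analogue for $\cR''$ together with (\ref{cR-asd}); and the expansion of $\cR$ itself is recovered by integrating that of $\cR'$. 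What your approach would buy, if the conjugation were justified, is a cleaner reduction to standard Schr\"odinger threshold theory; what the paper's buys is that every perturbed object is only ever estimated in the weighted norms where the Born factors are already under control. Your final step (inserting the scalar expansion into (\ref{Rf}) and collecting the $\nu^0$ terms into ${\cal B}^{\pm}$) is fine and mirrors how Corollary \ref{LAP} is deduced from Lemma \ref{sp}.
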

First we prove the boundedness of the resolvent near the points $\pm\mu$.
\begin{lemma}\la{bnd}
Let the conditions  (\re{V}) and (\ref{SC}) hold. Then
the families $\{\cR(\pm\mu+\ve): \pm\mu+\ve\in\C\setminus\ov\Gamma, |\ve|<\de\}$
are bounded in the operator norm
of ${\cal L}(\cF_{\si},\cF_{-\si})$ for any $\si>3/2$ and
sufficiently small $\de$.
\end{lemma}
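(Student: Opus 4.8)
The plan is to reduce the boundedness of the matrix resolvent $\cR(\lam)$ near the edges $\pm\mu$ to boundedness of the scalar operator $R(\lam)$, and then to transfer the latter to the Schr\"odinger resolvent $\tilde R(\zeta)$ at $\zeta\to0$, where the nonsingular condition \re{SC} applies. First I would exploit formula \re{Rf}: each entry of $\cR(\lam)$ is a composition of $R(\lam)$ with at most one factor $v\na-\lam$ on each side, the $(2,1)$ entry carrying the combination $1-(v\na-\lam)R(\lam)(v\na-\lam)$. Since $R(\lam)=(\cH+\lam^2-2v\lam\na)^{-1}$ is the inverse of a second order elliptic operator, it gains one derivative in the weighted scale, so every entry maps $\cF_\si$ into $\cF_{-\si}$ boundedly as soon as $R(\lam):H^0_\si\to H^1_{-\si}$ is bounded: the first order factors $v\na-\lam$ move the derivative, the embeddings $H^1_{-\si}\hookrightarrow H^0_{-\si}$ absorb the rest, and the ``$1$'' in the $(2,1)$ entry is bounded $H^1_\si\hookrightarrow H^0_{-\si}$ because $\si>0$. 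Thus it suffices to prove that $R(\lam):H^0_\si\to H^1_{-\si}$ stays bounded as $\lam\to\pm\mu$ in $\C\setminus\ov\Ga$, uniformly for $|\ve|<\de$, with $\si>3/2$.

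Next I would use the conjugation identity \re{RtilR1}, namely $R(\lam)=e^{-\ga^2v\lam x}\ga^2\tilde R(\zeta)e^{\ga^2v\lam y}$ with $\zeta=\ga^4(\lam^2-\mu^2)$. The map $\lam\mapsto\zeta$ is a local biholomorphism near $\lam=\pm\mu$, since its derivative $2\ga^4\lam=\pm2i\ga^3m$ does not vanish there; it sends $\pm\mu$ to $\zeta=0$ and the approach region $\C\setminus\ov\Ga$ onto a punctured neighbourhood of $0$ off the cut $(-\infty,0]$ (indeed the excluded rays of $\ov\Ga$ map exactly to $\zeta<0$). The condition \re{SC} states precisely that $\zeta=0$ is neither an eigenvalue nor a resonance of $-\Delta+\ga^2V$; by the Jensen--Kato--Murata theory \ci{A,M}, in the form already used in \ci{1dkg}, this is exactly the situation in which the one--dimensional $\zeta^{-1/2}$ threshold singularity is absent and the truncated resolvent $\tilde R(\zeta)$ converges, hence remains bounded, as $\zeta\to0$ in $\cL(H^0_\si,H^1_{-\si})$ for $\si>3/2$. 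This is the step where \re{SC} enters.

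It remains to carry the bound on $\tilde R(\zeta)$ back through the exponential conjugation, and this I expect to be the main obstacle, being the one genuinely new feature for $v\ne0$. The multiplications by $e^{\mp\ga^2v\lam x}$ are not bounded on the weighted spaces taken separately, because by \re{rr} one has $\rRe(v\lam)>0$, so one of the two factors grows at infinity. The plan is therefore to keep the conjugation and $\tilde R(\zeta)$ together and to estimate the composite integral kernel directly, exactly as in the proof of Lemma \re{sp} but with $\tilde R$ in place of $\tilde R_0$: the kernel of $R(\lam)$ acquires the factor $e^{-\ga^2v\lam(x-y)}$, while in the nonsingular case $\tilde R(\zeta,x,y)$ decays like $e^{-\ga^2\rRe\sqrt{\lam^2-\mu^2}\,|x-y|}$, so the inequality $\rRe(v\lam)<\rRe\sqrt{\lam^2-\mu^2}$ of \re{rr} forces the net kernel to decay in $|x-y|$ uniformly for small $\ve$. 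A Hilbert--Schmidt (or Schur) estimate of the conjugated and weighted kernel then gives the required uniform bound on $R(\lam)$, and via the first paragraph on $\cR(\lam)$. The point demanding care is that the derivative required by the $H^1_{-\si}$ target, when it falls on the growing exponential, produces only a bounded multiple $\ga^2v\lam$ of a lower order term, so that \re{rr} still governs the composite; checking this, together with the uniformity of the $\tilde R(\zeta)$ bound as $\zeta\to0$ from the relevant side of the cut, completes the argument.
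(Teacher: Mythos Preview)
Your route differs from the paper's. The paper works at the level of the matrix operator $\cA$: from the eigenvalue equation $\cA\Psi=\pm\mu\Psi$ it derives, via the conjugation \re{cH} evaluated only at the single point $\lam=\pm\mu$ (where $e^{\mp i\ga v m x}$ is unimodular and hence a bounded multiplier on every weighted space), the scalar equation $(-\De/\ga^2+V)e^{\pm i\ga vmx}\psi=0$; condition \re{SC} then excludes nonzero solutions in $\cF_0$ and in $\cF_{-1/2-0}$, so $\pm\mu$ is neither an eigenvalue nor a resonance of $\cA$, and the paper concludes by citing the Murata threshold machinery (``similarly to \cite[Theorem~7.2]{M}''). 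Your reduction through \re{Rf} and \re{RtilR1} to the Schr\"odinger resolvent $\tilde R(\zeta)$ is more explicit and self--contained, at the price of having to control the full conjugation $e^{-\ga^2v\lam x}\tilde R(\zeta)e^{\ga^2v\lam y}$ for $\lam$ \emph{off} the imaginary axis, where the exponentials are genuinely unbounded multipliers.

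That final step is where your sketch has a gap. The assertion that ``in the nonsingular case $\tilde R(\zeta,x,y)$ decays like $e^{-\ga^2\rRe\sqrt{\lam^2-\mu^2}\,|x-y|}$'' is not a consequence of the weighted operator bound on $\tilde R(\zeta)$; it is a separate pointwise statement about the kernel. In one dimension one can get at it through the Jost--solution representation $\tilde R(\zeta,x,y)=f_+(x_>)f_-(x_<)/W$, and in the nonsingular case the Wronskian $W$ stays away from zero as $\zeta\to0$; but each Jost function grows on the ``wrong'' half--line (linearly in the limit $\sqrt\zeta\to0$), so the uniform pointwise bound you would actually obtain carries a polynomial prefactor in $x,y$, not a bare exponential. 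With that corrected bound the Hilbert--Schmidt argument does go through for $\si>3/2$, since \re{rr} still yields net exponential decay in $|x-y|$ and the weights absorb the polynomial factor --- but this lemma has to be stated and proved, and it is the real content of the nonsingular hypothesis in your approach. The paper sidesteps the whole issue by using the conjugation only at $\lam=\pm\mu$, where it is unitary, and outsourcing the uniform bound to \cite{M}.
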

\begin{proof}
Let us consider the equation for eigenfunctions of operator $\cA$
with eigenvalues $\lam=\pm\mu$:
$$
\left(
\ba{cc}
v\nabla          & 1 \\
\Delta-m^2- V    & v\nabla
\ea
\right)\left(
\ba{c}
\psi \\ \pi
\ea
\right)=\pm\mu\left(\ba{c}
\psi \\ \pi
\ea\right),\quad \Psi=\left(\ba{c}\psi \\ \pi\ea\right)\in\cF_0
$$
From the first equation we have $\pi=-(v\nabla\mp\mu)\psi$.
Then the second equation becomes
\be\la{mu}
(\cH+\mu^2\mp 2v\mu\nabla)\psi=
e^{\mp i\ga vmx}(-\fr 1{\ga^2}\Delta+V)e^{\pm i\ga vmx}\psi=0
\ee
Hence, the condition (\ref{SC}) implies that $\Psi=0$.
Similarly,  (\ref{SC}) implies that the equation 
$\cA\Psi=\pm\mu\Psi$ has no nonzero solutions $\Psi\in\cF_{-1/2-0}$.
Then the required boundedness of the resolvent near the points $\pm\mu$
follows similarly to  \cite [Theorem 7.2 ]{M}.    
\end{proof}
This lemma implies that
the operators $(1+\cR_0(\lam)\cV)^{-1}=1-\cR(\lam)\cV$
and $(1+\cV\cR_0(\lam))^{-1}=1-\cV\cR(\lam)$  are bounded
in ${\cal L}(\cF_{-\si},\cF_{-\si})$ and in
${\cal L}(\cF_{\si},\cF_{\si})$ respectively
for $|\lam\mp\mu|<\de$, $\lam\in\C\setminus\overline\Gamma$.
Now we  prove more detailed asymptotics 
\begin{lemma}\la{A0}
The asymptotics hold
\be\la{A00}
(1\!+\!\cR_0(\lam)\cV)^{-1}{\cal B}_0^{\pm}\!={\cal O}(\sqrt\nu),\quad
{\cal B}_0^{\pm}(1+\cV\cR_0(\lam))^{-1}\!={\cal O}(\sqrt\nu),\quad\nu=\lam\mp\mu\to 0,
\quad\lam\in\C\setminus\Gamma
\ee
in ${\cal L}(\cF_{\si},\cF_{-\si})$ with $\si>3/2$. 
\end{lemma}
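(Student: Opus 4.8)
The plan is to establish the first relation $(1+\cR_0(\lam)\cV)^{-1}{\cal B}_0^{\pm}=\cO(\sqrt\nu)$; the second is obtained by the transposed computation, with $(1+\cV\cR_0(\lam))^{-1}=1-\cV\cR(\lam)$ in place of $(1+\cR_0(\lam)\cV)^{-1}=1-\cR(\lam)\cV$ and ${\cal B}_0^{\pm}$ acting on the left. I would start from the identity recorded after Lemma \ref{bnd},
$$(1+\cR_0(\lam)\cV)^{-1}{\cal B}_0^{\pm}=(1-\cR(\lam)\cV){\cal B}_0^{\pm}={\cal B}_0^{\pm}-\cR(\lam)\cV{\cal B}_0^{\pm},$$
noting that $1-\cR(\lam)\cV$ is bounded in $\cL(\cF_{-\si},\cF_{-\si})$ near $\pm\mu$ by the non-resonance hypothesis. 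The decisive structural observation is that, by (\ref{AB}) and (\ref{cAB}) and since $\ga^2v\mu=i\ga vm$, the kernel of $B_0^{\pm}$ separates as $e^{\mp i\ga vmx}e^{\pm i\ga vmy}$, so that ${\cal B}_0^{\pm}$ is a rank-one operator with range spanned by the single vector $\Phi_\pm=(u_\pm,\pm i\ga m\,u_\pm)$, $u_\pm(x)=e^{\mp i\ga vmx}/(2\sqrt{\pm2\mu})$. Writing ${\cal B}_0^{\pm}=\Phi_\pm\otimes\tilde\kappa_\pm$ with a fixed functional $\tilde\kappa_\pm\in(\cF_\si)^*$, the operator estimate reduces to the single-vector estimate $(1-\cR(\lam)\cV)\Phi_\pm=\cO(\sqrt\nu)$ in $\cF_{-\si}$.

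A direct computation gives $(\cA_0\mp\mu)\Phi_\pm=0$, i.e.\ $\Phi_\pm$ is the threshold resonance of the free generator, equivalently $(\cA\mp\mu)\Phi_\pm=\cV\Phi_\pm$. Since $\cV\Phi_\pm=(0,-Vu_\pm)\in\cF_\si$ decays rapidly by (\ref{V}), this relation together with the boundedness of the boundary value $\cR(\pm\mu)$ (Lemma \ref{bnd}) yields $\cR(\pm\mu)\cV\Phi_\pm=\Phi_\pm$, hence $(1-\cR(\pm\mu)\cV)\Phi_\pm=0$: the combination vanishes at $\nu=0$. This is the step where the non-resonance condition (\ref{SC}) is genuinely used.

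To turn vanishing into the rate, I would set $\xi(\lam)=(1-\cR(\lam)\cV)\Phi_\pm$, bounded in $\cF_{-\si}$, and insert the expansion (\ref{cR-as}), $\cR_0(\lam)={\cal B}_0^{\pm}\nu^{-1/2}+{\cal B}_1^{\pm}+\cO(\sqrt\nu)$, into $(1+\cR_0(\lam)\cV)\xi=\Phi_\pm$. As ${\cal B}_0^{\pm}\cV\xi=\langle\kappa_\pm,\xi\rangle\Phi_\pm$ is again a multiple of $\Phi_\pm$, this becomes
$$\xi+\nu^{-1/2}\langle\kappa_\pm,\xi\rangle\,\Phi_\pm+{\cal B}_1^{\pm}\cV\xi=\Phi_\pm+\cO(\sqrt\nu).$$
Boundedness of $\xi$ forces the scalar $c(\lam):=\langle\kappa_\pm,\xi\rangle=\cO(\sqrt\nu)$. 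Inverting the regular part $N=(1+{\cal B}_1^{\pm}\cV)^{-1}$ and pairing with $\kappa_\pm$ closes a scalar Sherman--Morrison equation $c=(1-\nu^{-1/2}c)A+\cO(\sqrt\nu)$ with $A=\langle\kappa_\pm,N\Phi_\pm\rangle$; solving gives $1-\nu^{-1/2}c=\cO(\sqrt\nu)/(\sqrt\nu+A)$, whence $\xi=(1-\nu^{-1/2}c)N\Phi_\pm+\cO(\sqrt\nu)=\cO(\sqrt\nu)$.

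The main obstacle is the final paragraph: one must justify that the regular part $1+{\cal B}_1^{\pm}\cV$ is invertible in $\cF_{-\si}$ and that the scalar $A=A(\pm\mu)$ is nonzero, since only then is $(\sqrt\nu+A)^{-1}$ bounded and the $\sqrt\nu$ rate genuine rather than merely $\cO(1)$. Both are precisely the algebraic content of the non-resonance condition (\ref{SC}) --- the absence of a zero-energy resonance or eigenvalue of $-\Delta+\ga^2V$ --- and I would extract them from Lemma \ref{bnd} together with the Fredholm structure behind the Born splitting (\ref{Born}), taking care that every $\cO(\sqrt\nu)$ remainder is controlled uniformly in the norm of $\cL(\cF_\si,\cF_{-\si})$.
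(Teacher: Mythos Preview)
Your structural observations are sound --- the rank-one nature of ${\cal B}_0^{\pm}$ with range spanned by the free threshold state $\Phi_\pm$, and the identity $\cR(\pm\mu)\cV\Phi_\pm=\Phi_\pm$ coming from $(\cA_0\mp\mu)\Phi_\pm=0$, are both correct and are exactly what underlies the subsequent Corollary~\ref{exp}. But for the lemma itself the paper's argument is a two-line rearrangement that bypasses everything after your first paragraph. From $\cR(\lam)=(1+\cR_0(\lam)\cV)^{-1}\cR_0(\lam)$ and the expansion $\cR_0(\lam)={\cal B}_0^{\pm}\nu^{-1/2}+\cO(1)$ (which holds already for $\si>3/2$ since ${\cal B}_1^{\pm}\in\cL(\cF_\si,\cF_{-\si})$ there) one isolates
\[
(1+\cR_0(\lam)\cV)^{-1}{\cal B}_0^{\pm}\,\nu^{-1/2}
=\cR(\lam)-(1+\cR_0(\lam)\cV)^{-1}\cdot\cO(1).
\]
Both terms on the right are $\cO(1)$ in $\cL(\cF_\si,\cF_{-\si})$: the first directly by Lemma~\ref{bnd}, the second because $(1+\cR_0(\lam)\cV)^{-1}=1-\cR(\lam)\cV$ is bounded in $\cL(\cF_{-\si},\cF_{-\si})$, again by Lemma~\ref{bnd}. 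Multiplying by $\sqrt\nu$ finishes; the companion estimate follows symmetrically from $\cR(\lam)=\cR_0(\lam)(1+\cV\cR_0(\lam))^{-1}$.

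Your route is not incorrect, but it is circuitous, and the obstacle you flag in the last paragraph is genuine rather than cosmetic: neither the invertibility of $1+{\cal B}_1^{\pm}\cV$ nor $A\ne 0$ drops out of (\ref{SC}) for free, and extracting them from Lemma~\ref{bnd} plus Fredholm theory essentially means redoing, in scalar form, the threshold boundedness that Lemma~\ref{bnd} already supplies. Since you have already invoked Lemma~\ref{bnd} in your opening paragraph, the $\sqrt\nu$ rate is available immediately by the rearrangement above, and the Sherman--Morrison detour buys nothing here.
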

\begin{proof}
The asymptotics (\ref{cR-as}) implies
$$
\left.\ba{ll} \cR(\lam)=\big(1+\cR_0(\lam)\cV\big)^{-1}\cR_0(\lam)
=\big(1+\cR_0(\lam)\cV\big)^{-1}\big({\cal B}_0^{\pm}\ds\frac{1}{\sqrt\nu}
+{\cal O}(1)\big)\\\\
\cR(\lam)=\cR_0(\lam)\big(1+\cV\cR_0(\lam)\big)^{-1}
=\big({\cal B}_0^{\pm}\ds\frac{1}{\sqrt\nu}+{\cal O}(1)\big)
\big(1+\cV\cR_0(\lam)\big)^{-1}\ea\right|
~\nu=\lam\mp\mu\to 0, ~~\lam\in\C\setminus\Gamma
$$
Hence, the boundedness $\cR(\lam)$, $(1+\cR_0(\lam)\cV)^{-1}$ 
and $(1+\cV\cR_0(\lam))^{-1}$ at the points $\lam=\pm\mu$
in corresponding norms imply the asymptotics (\ref{A00}).
\end{proof}
\begin{cor}\la{exp}
i) The asymptotics hold
\be\la{G00}
\Vert(1+\cR_0(\lam)\cV)^{-1}[e^{\mp i\ga vm x}]\Vert_{\cF_{-\si}}={\cal O}(\sqrt\nu),
\quad\nu=\lam\mp\mu\to 0, \quad\lam\in\C\setminus\Gamma,\quad\si>3/2
\ee
ii) For any $f\in \cF_{\si}$ with $\si>3/2$
\be\la{G01}
\int e^{\pm i\ga vmx}[(1+\cV\cR_0(\lam))^{-1}f](x)dx={\cal O}(\sqrt\nu),
\quad\nu=\lam\mp\mu\to 0, \quad\lam\in\C\setminus\Gamma
\ee
\end{cor}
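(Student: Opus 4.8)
The plan is to read both asymptotics off the rank-one structure of $B_0^{\pm}$, and hence of ${\cal B}_0^{\pm}$, and then quote Lemma \re{A0}. The starting point is the identity $\ga^2 v\mu=\ga^2 v\cdot im/\ga=i\ga vm$, which turns the kernel in (\re{AB}) into a product of a factor in $x$ and a factor in $y$:
\[
B_0^{\pm}(x,y)=\frac{e^{\mp i\ga vm\,x}\,e^{\pm i\ga vm\,y}}{2\sqrt{\pm 2\mu}} .
\]
Thus $B_0^{\pm}$ is the rank-one operator $f\mapsto \frac{1}{2\sqrt{\pm 2\mu}}\,e^{\mp i\ga vm x}\int e^{\pm i\ga vm y}f(y)\,dy$. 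Moreover the constant matrix in (\re{cAB}) has vanishing determinant and factors as $\left(\ba{cc}\mp i\ga m & -1\\ \ga^2 m^2 & \mp i\ga m\ea\right)=u\,w^{T}$ with $u=\left(\ba{c}1\\ \pm i\ga m\ea\right)$ and $w=\left(\ba{c}\mp i\ga m\\ -1\ea\right)$. Consequently ${\cal B}_0^{\pm}$ factors through a single scalar functional,
\[
{\cal B}_0^{\pm}\Psi=\frac{1}{2\sqrt{\pm 2\mu}}\,e^{\mp i\ga vm x}\,u\;\Big[\int e^{\pm i\ga vm y}\,w^{T}\Psi(y)\,dy\Big],\qquad \Psi=(\psi,\pi),
\]
so that ${\cal B}_0^{\pm}$ sends every $\Psi$ to a scalar multiple of the fixed vector-function $e^{\mp i\ga vm x}u$, which lies in $\cF_{-\si}$ with finite nonzero norm for $\si>1/2$ because the oscillating exponential belongs to $H^1_{-\si}\cap H^0_{-\si}$. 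I read the bracket $[e^{\mp i\ga vm x}]$ in (\re{G00}) as this range-generating vector $e^{\mp i\ga vm x}u$, and the integral in (\re{G01}) as the scalar contraction $\int e^{\pm i\ga vm x}\,w^{T}[\,\cdot\,]\,dx$; these are the only interpretations under which the statements are controlled by ${\cal B}_0^{\pm}$.

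To prove i), I realize $e^{\mp i\ga vm x}u$ as a value of ${\cal B}_0^{\pm}$. Choosing any $g=(g_1,0)\in\cF_{\si}$ with $\int e^{\pm i\ga vm y}g_1(y)\,dy\neq 0$ (such $g_1\in H^1_\si$ plainly exists), the factorization gives ${\cal B}_0^{\pm}g=\kappa\,e^{\mp i\ga vm x}u$ with $\kappa\neq 0$. The first asymptotics of Lemma \re{A0} then yields
\[
\Vert(1+\cR_0(\lam)\cV)^{-1}[e^{\mp i\ga vm x}]\Vert_{\cF_{-\si}}
=|\kappa|^{-1}\Vert(1+\cR_0(\lam)\cV)^{-1}{\cal B}_0^{\pm}g\Vert_{\cF_{-\si}}
\le |\kappa|^{-1}\Vert(1+\cR_0(\lam)\cV)^{-1}{\cal B}_0^{\pm}\Vert\,\Vert g\Vert_{\cF_\si}={\cal O}(\sqrt\nu),
\]
which is (\re{G00}).

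To prove ii), I apply the second asymptotics of Lemma \re{A0} to $f\in\cF_\si$. By the factorization above, the operator value ${\cal B}_0^{\pm}(1+\cV\cR_0(\lam))^{-1}f$ equals $\frac{1}{2\sqrt{\pm 2\mu}}\,e^{\mp i\ga vm x}u$ times the scalar $\int e^{\pm i\ga vm x}\,w^{T}[(1+\cV\cR_0(\lam))^{-1}f](x)\,dx$, which is exactly the integral in (\re{G01}). Since $e^{\mp i\ga vm x}u$ has a fixed nonzero $\cF_{-\si}$-norm, the $\cF_{-\si}$-norm of the operator value controls that scalar, and Lemma \re{A0} gives $\Vert{\cal B}_0^{\pm}(1+\cV\cR_0(\lam))^{-1}\Vert={\cal O}(\sqrt\nu)$; hence the integral is ${\cal O}(\sqrt\nu)\Vert f\Vert_{\cF_\si}$, proving (\re{G01}).

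The argument is essentially algebraic, and there is no real analytic obstacle, since all the decay content sits in Lemma \re{A0}. The only points requiring care are the rank-one factorization $M=uw^{T}$ (which hinges on $\det M=0$, i.e.\ on $\mu=im/\ga$) and the correct reading of the bracket and the integral: because ${\cal B}_0^{\pm}$ controls only the single combination $\int e^{\pm i\ga vm x}w^{T}[\,\cdot\,]\,dx$ and not the individual components (indeed the first component of $(1+\cV\cR_0)^{-1}f$ equals $f_1$, so $\int e^{\pm i\ga vm x}f_1\,dx$ need not be small), a componentwise interpretation of (\re{G01}) would be false, and the $w$-contraction is exactly the combination in which the leading $1/\sqrt\nu$ singularities cancel.
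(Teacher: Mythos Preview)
Your proof is correct and is precisely the argument the paper intends: the corollary is stated without proof as an immediate consequence of Lemma \re{A0}, and your rank-one factorization of ${\cal B}_0^{\pm}$ (via $\ga^2 v\mu=i\ga vm$ and $\det M=0$) is exactly what makes that immediacy explicit. Your caveat about reading (\re{G01}) as the $w$-contracted scalar rather than componentwise is well taken and is the sense in which the corollary is actually used in the proof of Proposition \re{AP}.
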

{\bf Proof of Proposition  \re{RRR}}.
Taking into account the identities
$$
\cR'=(1+\cR_0\cV)^{-1}\cR'_0(1+\cV\cR_0)^{-1},\quad
\cR''=\Big[(1+\cR_0\cV)^{-1}\cR''_0-2\cR'\cV\cR'_0\Big](1+\cV\cR_0)^{-1}
$$
we obtain from (\ref{cR-asd}) and (\ref{G00})-(\ref{G01}) 
the asymptotics (\ref{RRR}) for the derivatives. The asymptotics (\ref{RRR})
for $\cR(\lam)$ follows by integration of asymptotics for $\cR'(\lam)$. 
Proposition \re{RRR} is proved.
\begin{cor}
Let the conditions  (\re{V}) and (\ref{SC}) hold. Then the set $\Si$
of eigenvalues of the operator $\cA$ is finite, i.e.
$\Si=\{\lam_j,~~j=1,...,N\}$.
\end{cor}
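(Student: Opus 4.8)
The plan is to identify $\Si$ with the pole set of the meromorphic operator function $\cR(\lam)$, and then to show that this set is bounded and has no accumulation point on its natural boundary $\ov\Ga=\Ga\cup\{\pm\mu\}$ nor at infinity. First I would recall that, by Corollary \re{LP}-i, $\cR(\lam)$ is a meromorphic function of $\lam\in\C\setminus\ov\Ga$ with values in $\cL(\cF_0,\cF_0)$, whose poles are precisely the eigenvalues $\Si$ of $\cA$. The discreteness of $\Si$ in $\C\setminus\ov\Ga$ is already contained in the proof of Proposition \re{sp1}: since $VR_0(\lam)$ is compact and analytic in $\lam\in\C\setminus\ov\Ga$, the Gohberg--Bleher analytic Fredholm theorem guarantees that $(1+\cR_0(\lam)\cV)^{-1}$ is meromorphic with a pole set having no accumulation point inside $\C\setminus\ov\Ga$. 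Hence accumulation of $\Si$ can occur only on $\ov\Ga$ or at $\lam=\infty$, and it remains to exclude these three possibilities.

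To rule out accumulation at infinity I would use the Born series (\re{id}). Grouping $\cR_0\cV\cR_0\cV=\cR_0{\cal W}$ with ${\cal W}(\lam)=\cV\cR_0(\lam)\cV$, the series is equivalent to
$$
\big(1-\cR_0(\lam){\cal W}(\lam)\big)\cR(\lam)=\cR_0(\lam)\big(1-\cV\cR_0(\lam)\big).
$$
By Corollary \re{LAP}-v the factor $\cR_0(\lam)$ is $\cO(1)$ in $\cL(\cF_\si,\cF_{-\si})$, while Lemma \re{large1} gives $\Vert{\cal W}(\lam)\Vert_{\cL(\cF_{-\si},\cF_\si)}=\cO(|\lam|^{-2})$; consequently $\Vert\cR_0(\lam){\cal W}(\lam)\Vert_{\cL(\cF_{-\si},\cF_{-\si})}=\cO(|\lam|^{-2})$. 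Thus $1-\cR_0(\lam){\cal W}(\lam)$ is invertible by a Neumann series for all $\lam\in\C\setminus\ov\Ga$ with $|\lam|\ge R$ and some $R>0$, so the right-hand side determines $\cR(\lam)$ as an analytic function there. Therefore $\Si\subset\{|\lam|<R\}$ is bounded.

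Finally I would exclude accumulation on $\ov\Ga$. On the relatively open part of $\Ga$ the limiting absorption principle (Corollary \re{LP}-ii) provides continuous boundary values $\cR(\lam\pm 0)$ in $\cL(\cF_\si,\cF_{-\si})$ on every compact subset, so no pole lies on, or accumulates at, an interior point of $\Ga$. At the thresholds $\pm\mu$, Lemma \re{bnd} together with Proposition \re{AP} shows, under the nonsingular condition (\re{SC}), that $\cR(\lam)$ stays bounded (indeed $\cR(\lam)={\cal B}^{\pm}+\cO(\nu^{1/2})$ with $\nu=\lam\mp\mu$) as $\nu\to 0$, so $\pm\mu$ are not accumulation points either. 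Collecting the three steps, $\Si$ is a discrete subset of the bounded set $\{|\lam|\le R\}\setminus\ov\Ga$ whose closure contains no accumulation point; by Bolzano--Weierstrass it must be finite, $\Si=\{\lam_j:\ j=1,\dots,N\}$.

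The main obstacle is the control of $\Si$ near the natural boundary $\ov\Ga$, and especially at the thresholds $\pm\mu$: this is exactly where the nonsingular hypothesis (\re{SC}) is indispensable, since without it $\cR$ would generically blow up like $\nu^{-1/2}$ at $\pm\mu$, and a resonance or threshold eigenvalue could then serve as an accumulation point of $\Si$. By contrast, the confinement to $\{|\lam|<R\}$ and the exclusion of interior points of $\Ga$ are routine consequences of the high-energy decay and the limiting absorption principle established earlier.
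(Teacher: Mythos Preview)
Your argument is correct and is precisely the proof the paper leaves implicit: the corollary is stated without proof immediately after Proposition \ref{AP} because the finiteness of $\Si$ follows by combining the meromorphy of $\cR(\lam)$ in $\C\setminus\ov\Ga$ (Corollary \ref{LP}-i), the absence of poles near $\Ga$ (Corollary \ref{LP}-ii), the boundedness at the thresholds $\pm\mu$ under (\ref{SC}) (Lemma \ref{bnd} / Proposition \ref{AP}), and the high-energy control from Corollary \ref{LAP}-v and Lemma \ref{large1}. Your write-up simply makes this assembly explicit, so there is nothing to add.
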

\subsection{Time decay}
 Our main result is 
\begin{theorem}\la{main}
Let conditions (\ref{V}) and (\ref{SC}) hold. Then
\begin{equation}\la{full1}
   \Vert e^{t{\cal A}}-\sum\limits_{\om_j\in\Sigma}
   e^{\lam_j t}P_j\Vert_{\cL (\cF _\si,\cF _{-\si})}
  ={\cal O}(|t|^{-3/2}),\quad t\to\pm\infty
\end{equation}
with  $\sigma>5/2$, where  $P_j$ are the Riesz
projections onto the corresponding eigenspaces.
\end{theorem}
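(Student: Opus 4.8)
The plan is to reproduce the architecture of the proof of Theorem \ref{TD}, now working with the perturbed resolvent $\cR(\lam)$ in place of the free one. First I would record the Cauchy representation for the perturbed group: deforming the Fourier--Laplace contour onto $\Gamma$ and collecting the residues of $\cR(\lam)$ at the (finitely many) eigenvalues $\lam_j\in\Si$ gives, for $t>0$,
\[
  e^{t\cA}-\sum_{\lam_j\in\Si}e^{\lam_j t}P_j
  =\frac{1}{2\pi i}\int_\Gamma e^{\lam t}\big[\cR(\lam-0)-\cR(\lam+0)\big]\,d\lam ,
\]
the integral converging in $\cL(\cF_\si,\cF_{-\si})$ by Corollary \ref{LP}; the case $t\to-\infty$ is symmetric. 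As in (\ref{Gs})--(\ref{Gh}) I would split the right-hand side with the cutoffs $l$ and $h=1-l$ into a low-energy part supported near the thresholds $\pm\mu$ and a high-energy part, and estimate the two separately.

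For the high-energy part I would substitute the Born series (\ref{id}), $\cR=\cR_0-\cR_0\cV\cR_0+\cR_0{\cal W}\cR$ with ${\cal W}=\cV\cR_0\cV$. The leading term reproduces the free high-energy group $\cG_h(t)=\cO(t^{-3/2})$ of Theorem \ref{TD}. The genuinely perturbed remainder $\cR_0{\cal W}\cR$ is $\cO(|\lam|^{-2})$ together with its first two $\lam$-derivatives, by Lemma \ref{large1} and Corollary \ref{chd} (this is where the hypothesis $\beta>5$ in (\ref{V}) is consumed); hence the corresponding $\Gamma$-integral converges and two integrations by parts give $\cO(t^{-2})\subset\cO(t^{-3/2})$. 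The intermediate term $\cR_0\cV\cR_0$, which is only $\cO(1)$ in $\lam$ and thus not directly integrable along $\Gamma$, is handled by the convolution method of \cite{1dkg}: the product of resolvents becomes a time convolution $\int_0^t$ of free high-energy groups, each bounded near $t=0$ and $\cO(\langle t\rangle^{-3/2})$ by Theorem \ref{TD}, and the convolution of two such factors is again $\cO(\langle t\rangle^{-3/2})$ since $3/2>1$.

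For the low-energy part the decisive input is the nonsingular hypothesis (\ref{SC}) as encoded in Proposition \ref{AP}: the perturbed resolvent carries no $1/\sqrt\nu$ singularity at the edges, and $\cR''(\lam)=\cO(\nu^{-3/2})$. Consequently the jump $r(\lam):=\cR(\lam-0)-\cR(\lam+0)$ near $\mu$ has its constant part ${\cal B}^+$ cancelled and behaves like $\cO(\sqrt{|\nu|})$, with $r''=\cO(|\nu|^{-3/2})$ along $\Gamma$. Writing $\lam=i\om$, $\nu=i\eta$ with $\eta\to0+$, the contribution becomes $e^{imt/\ga}\int_0^\delta e^{i\eta t}\tilde r(\eta)\,d\eta$, where $\tilde r(\eta)=\cO(\sqrt\eta)$ and $\tilde r''(\eta)=\cO(\eta^{-3/2})$. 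Splitting the integral at $\eta=1/t$, bounding the inner piece directly by $\int_0^{1/t}\sqrt\eta\,d\eta=\cO(t^{-3/2})$ and the outer piece by two integrations by parts --- whose boundary contributions at $\eta=1/t$ and the tail $t^{-2}\int_{1/t}^\delta\eta^{-3/2}d\eta$ are all $\cO(t^{-3/2})$ --- yields the desired rate; the edge $-\mu$ is identical. Collecting the low- and high-energy bounds gives (\ref{full1}).

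I expect the main obstacle to be precisely this low-energy analysis: extracting the sharp exponent $t^{-3/2}$ from the square-root threshold behavior forces the careful split at $\eta\sim1/t$ together with the second-derivative bound $\cR''=\cO(\nu^{-3/2})$, and it is here that hypothesis (\ref{SC}) is indispensable. The cancellation $(1+\cR_0\cV)^{-1}{\cal B}_0^\pm=\cO(\sqrt\nu)$ of Lemma \ref{A0}, which annihilates the free singular term ${\cal B}_0^\pm/\sqrt\nu$ responsible for the slowly decaying component $\cG_b(t)$, is exactly what upgrades the free threshold rate $t^{-1/2}$ to the perturbed rate $t^{-3/2}$.
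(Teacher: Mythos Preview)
Your proposal is correct and follows the paper's proof essentially line for line: same contour representation, same low/high-energy split, and the same three-term Born decomposition of the high-energy piece (free term via Theorem~\ref{TD}, the $\cR_0\cV\cR_0$ term via the time-convolution trick of \cite{1dkg}, the $\cR_0{\cal W}\cR$ term via two integrations by parts using Lemma~\ref{large1} and Corollary~\ref{chd}). The only cosmetic difference is in the low-energy analysis, where the paper packages your split-at-$1/t$-and-integrate-by-parts argument as a black-box citation to Lemma~10.2 of Jensen--Kato \cite{jeka}, whose hypotheses $F(|\mu|)=0$ and $F''=\cO(|\om-|\mu||^{-3/2})$ are precisely the consequences of Proposition~\ref{AP} that you invoke.
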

\begin{proof}
Corollaries \re{LP} and \re{chd} and Proposition \re{AP}
imply similarly to (\re{Gint1}), that
$$
  \Psi(t)-\sum\limits_{\lam_j\in\Sigma}e^{\lam_j t}P_j\Psi_0=
  \fr 1{2\pi i}\int\limits_\Gamma e^{\lam t}
  \Big[\cR (\lam-0)-\cR (\lam+0)\Big]\Psi_0~ d\lam
=\Psi_l(t)+\Psi_h(t)
$$
where $P_j\Psi_0:=\ds\fr 1{2\pi i}\int_{|\lam-\lam_j|=\delta}\cR(\lam)\Psi_0 d\lam$
with a small $\delta>0$, and low and high energy components are
defined by
\be\la{idl}
\Psi_l(t)=\fr 1{2\pi i}\int\limits_\Gamma l(i\lam)e^{\lam t}
\Big[\cR (\lam-0)-\cR (\lam+0)\Big]\Psi_0~ d\lam
\ee
\be\la{idh}
\Psi_h(t)=\fr 1{2\pi i}\int\limits_\Gamma h(i\lam)e^{\lam t}
\Big[\cR (\lam-0)-\cR (\lam+0)\Big]\Psi_0~ d\lam
\ee
where $l(i\lam)$ and $h(i\lam)$ are defined in Section \ref{Tdec}.
We analyze $\Psi_l(t)$ and $\Psi_h(t)$ separately.
\subsubsection{Low energy component}
We prove the desired decay of $\Psi_l(t)$ using
a special case of Lemma 10.2 from \cite{jeka}.
We consider only the integral (\re{idl}) over $(\mu,\mu+2i\ve)$.
The integral over $(-\mu-2i\ve,-\mu)$ is dealt with in the same way.
Denote by ${\bf B}$  a Banach space with the norm $\Vert\cdot\Vert\,.$
\begin{lemma}\label{jk}
Let $F\in C([a, b],\, {\bf B})$, satisfy
$$
  F(a)=F(b)=0,~~~~
  \Vert F''(\om)\Vert=\cO (|\om-a|^{-3/2}),~~~\om\to a
$$
Then
$$
  \int\limits_a^b e^{-it\omega}F(\omega)d\omega =\cO (t^{-3/2}),
  \quad t\to\infty
$$
\end{lemma}
Due  to (\ref{RRR}), we can apply Lemma \ref{jk} with
$\om=-i\lam$,
$F=l(\om)\big(\cR (i\om-0)-\cR (i\om+0)\big)$,
${\bf B}= \cL (\cF _{\si},\cF _{-\si})$, $a=|\mu|$,
$b=|\mu|+2\ve$ and $\si>5/2$, to get
$$
  \Vert \Psi_l(t)\Vert_{\cF _{-\si}}
 \le C(1+|t|)^{-3/2}\Vert \Psi_0\Vert_{\cF _\si},
  \quad t\in\R,\quad \sigma>5/2
$$
\subsubsection{High energy component}
Let us substitute the series (\re{id}) into the spectral representation
(\re{idh}) for $\Psi_h(t)$:
\beqn\nonumber
  \Psi_h(t)
  &=&\fr 1{2\pi i}\int\limits_\Gamma e^{\lam t}h(i\lam)
  \Big[\cR _0(\lam-0)-\cR _0(\lam+0)\Big]\Psi_0~ d\lam\\
  \nonumber
  &+&\fr 1{2\pi i}\int\limits_\Gamma e^{\lam t}h(i\lam)
\Big[\cR _0(\lam-0)
  \cV \cR _0(\lam-0)-\cR _0(\lam+0)\cV \cR _0(\lam+0)\Big]\Psi_0~ d\lam\\
  \nonumber
  &+&\frac 1{2\pi i}\int\limits_\Gamma e^{\lam t}h(i\lam)
  \Big[\cR _0\cV \cR _0\cV \cR (\lam-0)
  -\cR _0\cV \cR _0\cV \cR (\lam+0)\Big]\Psi_0~ d\lam\\
  &=&\Psi_{h1}(t)+\Psi_{h2}(t)+\Psi_{h3}(t),~~~~~~t\in\R \nonumber
\eeqn
We analyze each term $\Psi_{hk}$, $k=1,2,3$ separately.
\\
{\it Step i)}
The first term
$\Psi_{h1}(t)=\cG_h(t)\Psi_0$ by  (\ref{Gh}). Hence,
Theorem \ref{TD} implies that
\be\la{lins1}
  \Vert \Psi_{h1}(t)\Vert_{\cF _{-\si}}\le
 C(1+|t|)^{-3/2}\Vert \Psi_0\Vert_{\cF _\si},
  \quad t\in\R,\quad \sigma>5/2
\ee
{\it Step ii)}
Now we consider the second term $\Psi_{h2}(t)$.
Denote  $h_1(\om)=\sqrt{h(\om)}$ 
(we can assume that $h(\om)\ge 0$ and $h_1\in\C_0^\infty(\R)$).
We set
$$
\Phi_{h1}=\fr 1{2\pi i}\int\limits_\Gamma e^{\lam t}h_1(i\lam)
\Big[\cR _0(\lam-0)-\cR _0(\lam+0)\Big]\Psi_0~ d\lam
$$
It is obvious that for $\Phi_{h1}$ the inequality (\ref{lins1}) also holds.
Namely,
$$
  \Vert \Phi_{h1}(t)\Vert_{\cF _{-\si}}
 \le C(1+|t|)^{-3/2}\Vert \Psi_0\Vert_{\cF _\si},
  \quad t\in\R,\quad \sigma>5/2
$$
Further, the second term $\Psi_{h2}(t)$ can be written as a convolution.
\begin{lemma}(cf.  \ci[Lemma 3.11]{1dkg})
The convolution representation holds
\be\la{P2}
  \Psi_{h2}(t)=
  \int\limits_0^t \cG_{h1}(t-\tau)\cV \Phi_{h1}(\tau)~d\tau,~~~~t\in\R
\ee
where the integral converges in $\cF_{-\si}$ with $\si>5/2$.
\end{lemma}
Applying   Theorem \ref{TD} with $h_1$ instead of $h$ to the integrand
in (\re{P2}), we obtain that
$$
  \Vert \cG_{h1}(t-\tau)\cV \Phi_{h1}(\tau)\Vert_{\cF _{-\si}}
  \le \ds\fr{C\Vert \cV \Phi_{h1}(\tau)\Vert_{\cF _{\si'}}}{(1+|t-\tau|)^{3/2}}
  \le\ds\fr{C\Vert\Phi_{h1}(\tau)\Vert_{\cF _{\si'-\beta}}}{(1+|t-\tau|)^{3/2}}
  \le\ds\fr{C\Vert \Psi_0\Vert_{\cF _\si}}{(1+|t-\tau|)^{3/2}(1+|\tau|)^{3/2}}
$$
where $\si'\in (5/2,\beta-5/2)$.
Therefore, integrating here in $\tau$, we obtain by (\re{P2}) that
$$
  \Vert \Psi_{h2}(t)\Vert_{\cF _{-\si}}\le C(1+|t|)^{-3/2}\Vert \Psi_0\Vert_{\cF _\si},
  \quad t\in\R,\quad \sigma>5/2
$$
{\it Step iii)}
Let us rewrite the last term $\Psi_{h3}(t)$ as
$$
\Psi_{h3}(t)=\frac 1{2\pi i}\int\limits_{\Gamma}e^{\lam t}h(i\lam)
{\cal N}(\lam)\Psi_0~ d\lam,
$$
where ${\cal N}(\lam):={\cal M}(\lam-0)-{\cal M}(\lam+0)$ for
$\lam\in\Ga$, and
$$
  {\cal M}(\lam\pm 0):=\cR _0(\lam\pm 0)\cV
  \cR _0(\lam\pm 0)\cV \cR (\lam\pm 0)=
  \cR _0(\lam\pm 0){\cal W}(\lam\pm 0) \cR (\lam\pm 0),\quad
  \lam\in\Ga
$$
The asymptotics (\ref{bR0}), (\ref{bR}) and (\ref{bRV})
for $\cR_0^{(k)}(\lam\pm 0)$,  $\cR^{(k)}(\lam\pm 0)$ 
and ${\cal W}^{(k)}(\lam\pm 0)$ imply
\begin{lemma}\la{bM} (cf.\ci[Lemma 3.12]{1dkg}) For $k=0,1,2$  the asymptotics hold
$$
  \Vert{\cal M}^{(k)}(\lam\pm 0)\Vert_{\cL (\cF _{\sigma},\cF _{-\sigma})}
  ={\cal O}(|\lam|^{-2}),\quad |\lam|\to\infty,\quad \lam\in\Ga,\quad\si>1/2+k
$$
\end{lemma}
Finally, we prove the decay of $\Psi_{h3}(t)$.
By Lemma \ref{bM}
$$
(h {\cal N})''\in L^1((-i\infty,-\mu-i\ve)\cup(\mu+i\ve,i\infty);
\cL(\cF _{\si},\cF _{-\si}))
$$
with $\si>5/2$. Hence, two times partial integration implies that
$$
\Vert \Psi_{h3}(t)\Vert_{\cF _{-\si}}
 \le C(1+|t|)^{-2}\Vert \Psi_0\Vert_{\cF _\si},
  \quad t\in\R
$$
This completes the proof of Theorem \ref{main}.
\end{proof}
\begin{cor}
The asymptotics (\re{full1}) imply (\re{full}) with the projection
$$
\cP_c=1-\cP_d,\quad\cP_d=\sum_{\om_j\in\Si}P_j
$$
\end{cor}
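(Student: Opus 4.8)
The plan is to read off the corollary directly from the operator-norm estimate \eqref{full1}, since once $\cP_c$ and $\cP_d$ are defined there is essentially nothing left to prove beyond a commutation identity. First I would recall that each Riesz projection
$$
P_j=\fr 1{2\pi i}\int_{|\lam-\lam_j|=\delta}\cR(\lam)\,d\lam
$$
is built from a contour integral of the resolvent $\cR(\lam)=(\cA-\lam)^{-1}$ around an isolated eigenvalue $\lam_j\in\Si$, and hence commutes with $\cR(\lam)$ and therefore with the generator $\cA$ and its group $e^{t\cA}$. Consequently both $\cP_d=\sum_{\om_j\in\Si}P_j$ and $\cP_c=1-\cP_d$ commute with $e^{t\cA}$.

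Next I would use that, by the finiteness of $\Si$ established in the preceding corollary and the semisimple structure already built into the very form of \eqref{full1}, the group acts on each eigenspace by $e^{t\cA}P_j=e^{\lam_j t}P_j$. Writing $\Psi(t)=e^{t\cA}\Psi_0$ for the solution with initial data $\Psi_0\in\cF_\si$, this gives the pointwise-in-$t$ identity
$$
\cP_c\Psi(t)=e^{t\cA}\Psi_0-\cP_d\,e^{t\cA}\Psi_0
=\Big(e^{t\cA}-\sum_{\om_j\in\Si}e^{\lam_j t}P_j\Big)\Psi_0
$$
in $\cF_{-\si}$.

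Finally I would take $\cF_{-\si}$-norms and apply \eqref{full1}:
$$
\Vert\cP_c\Psi(t)\Vert_{\cF_{-\si}}
\le\Big\Vert e^{t\cA}-\sum_{\om_j\in\Si}e^{\lam_j t}P_j\Big\Vert_{\cL(\cF_\si,\cF_{-\si})}\Vert\Psi_0\Vert_{\cF_\si}
=\cO(|t|^{-3/2}),\quad t\to\pm\infty,
$$
for $\si>5/2$, which is precisely \eqref{full}. The only step that is not purely formal is the commutation identity $e^{t\cA}P_j=e^{\lam_j t}P_j$; this is where the finiteness of the eigenvalue set and the absence of nontrivial Jordan blocks (implicit in the form of \eqref{full1}) enter. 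It presents no analytical obstacle, since no decay estimates are involved and the entire content of the corollary is the elementary reduction of a vector-valued bound to the already-proven operator-norm bound.
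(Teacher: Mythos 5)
Your proof is correct and matches the paper's reasoning, which leaves this corollary unproved as an immediate consequence: the whole content is the pointwise identity $\cP_c\Psi(t)=\big(e^{t\cA}-\sum_{\lam_j\in\Si}e^{\lam_j t}P_j\big)\Psi_0$, obtained from $P_jP_k=\de_{jk}P_j$ and commutation of the Riesz projections with the group, followed by the operator-norm bound (\ref{full1}). The one step you flag, $e^{t\cA}P_j=e^{\lam_j t}P_j$, is indeed already implicit in the paper's proof of Theorem \ref{main}, where the residue of $e^{\lam t}\cR(\lam)$ at each $\lam_j$ is identified with $e^{\lam_j t}P_j$ (simple poles, no nilpotent part), so no further argument is required at the corollary stage.
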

\appendix
\section{Proof of Lemma \ref{l1}}

Formulas (\re{AX})- (\ref{J0}) imply
$$
\cG_v(z,t)=\tilde \cG_b(z,t)+{\tilde{\cal G}_r}(z,t)
$$
where
$$
\tilde \cG_b(z,t)\!=\!\frac
{\theta(t\!-\!|z\!-\!vt|)}{\sqrt{2m\pi}} \!\left(\ba{cc}\!\!\!\!
-\ds\fr{mt\sin(m\sqrt{t^2-(z-vt)^2}-\frac{\pi}4)}{\sqrt[4]{(t^2-(z-vt)^2)^3}}
&\ds\fr{\cos(m\sqrt{t^2-(z-vt)^2}-\frac{\pi}4)}{\sqrt[4]{t^2-(z-vt)^2}} \\\\
\!\!\!\!
-\ds\fr{m^2t^2\cos(m\sqrt{t^2-(z-vt)^2}-\frac{\pi}4)}{\sqrt[4]{(t^2-(z-vt)^2)^5}}
&-\ds\fr{mt\sin(m\sqrt{t^2-(z-vt)^2}-\frac{\pi}4)}{\sqrt[4]{(t^2-(z-vt)^2)^3}}
\ea \!\!\!\!\right)
$$
For $\ve\in (|v|,1)$ and $|z|\le(\ve-|v|)t$ we have $|z-vt|\le\ve t$. Hence 
$$
|\partial^{k}_z{\tilde{\cal G}}_r(z,t)|\le C(\ve )t^{-3/2},
\quad|z|\le(\ve-|v|) t,\quad k=0,1
$$
by known asymptotics of the Bessel function (see \ci{W}, p.195).
It remains to prove the bounds of type (\re{G2R}) for the
difference $Q(z,t)=\tilde \cG_b(z,t)-\cG_b(z,t)$.
Let us consider the entry $Q^{12}(t,z)$: 
$$
Q^{12}(t,z)=\frac {1}{\sqrt{2\pi m}}
\Big[\fr{\cos(m\sqrt{t^2-(z-vt)^2}-\frac{\pi}4)}{\sqrt[4]{t^2-(z-vt)^2}}
-\frac{\cos(m(\fr{t}{\ga}+\ga vz)-\frac{\pi}4)}{\sqrt{t/\ga}}\Big]
$$
For $|z|\le(\ve-|v|) t$ we have
\beqn\nonumber
&&\Big|\frac 1{\sqrt[4]{t^2-(z-vt)^2}}-\frac
1{\sqrt{t/\ga}}\Big|\\
\nonumber
&&=\frac{|z^2-2vtz|}
{\sqrt[4]{t^2-(z-vt)^2}\sqrt{t/\ga}\big(\sqrt[4]{t^2-(z-vt)^2}+\sqrt{t/\ga}\big)
\big(\sqrt{t^2-(z-vt)^2}+t/\ga\big)}\le\frac {C(\ve) |z|}{t\sqrt t}
\eeqn
Further,
$$
\Big|\cos\Big(m\sqrt{t^2\!-\!(z\!-\!vt)^2}-\frac{\pi}4\Big)
-\cos\Big(\fr m\ga(t+\ga^2 vz)-\frac{\pi}4\Big)\Big| \le
2\Big|\sin\Big(\frac m2(\sqrt{t^2\!-\!(z\!-\!vt)^2}-\fr{t+\ga^2 vz}{\ga}\Big)\Big|
$$
$$
\le C\Big|\sqrt{t^2-(z-vt)^2}-(t+\ga^2 vz)/\ga\Big|
\le C\fr{z^2(1+\ga^2v^2)}{|\sqrt{t^2-(z-vt)^2}+(t+\ga^2vz)/\ga|}
\le\frac {C(\ve) z^2}{t}
$$
since $\ga^2|v||z|\le(1-|v|)t/(1-v^2)\le t/(1+|v|)\le t$.
Hence, 
\be\la{a} 
|Q^{12}(t,z)|\le C(\ve)(1+z^2)t^{-3/2},~~~|z|\le(\ve-|v|) t 
\ee
 Differentiating
$Q^{12}(t,z)$, we obtain for $|z|\le(\ve- |v|) t$
$$
\partial_z Q^{12}(t,z)=\fr{z-vt}{\sqrt{2\pi m}}
~\frac{\cos(m\sqrt{t^2-(z-vt)^2}-\frac{\pi}4)}{2\sqrt[4]{(t^2-(z-vt)^2)^5}}
+\sqrt{\fr{m}{2\pi}}\fr{z\sin(m\sqrt{t^2-(z-vt)^2}-\frac{\pi}4)}{\sqrt[4]{(t^2-(z-vt)^2)^3}}
$$
$$
+\sqrt{\fr{m}{2\pi}}vt\Big[\fr{-\sin(m\sqrt{t^2-(z-vt)^2}-\frac{\pi}4)}
{\sqrt[4]{(t^2-(z-vt)^2)^3}}+\fr{\sin(m(\fr{t}{\ga}+\ga
vz)-\frac{\pi}4)} {\sqrt{(t/\ga)^3}}\Big]
$$
Hence, by the arguments above, 
\be\la{b}
 |\partial_z Q^{12}(t,z)|\le  C(\ve)(1+z^2)\,t^{-3/2},\quad |z|\le(\ve-|v|) t 
\ee
Other entries  $Q^{ij}(t,z)$ also admit the estimates of type
(\ref{a}) and (\ref{b}). Hence, the lemma follows since ${\cal
G}_r(t)=\tilde{\cal G}_r(t)+Q(t,z)$.

\end{document}